\newtheorem{condition}{\bf Condition}
\newtheorem{theorem}{\bf Theorem}
\newtheorem{property}{\bf Property}
\newtheorem{proposition}{\bf Proposition}
\newtheorem{definition}{\bf Definition}
\newlength{\aligntop}
\newlength{\alignbot}
\renewenvironment{align}{%
  \vspace{\aligntop}
  \start@align\@ne\st@rredfalse\m@ne
}{%
  \math@cr \black@\totwidth@
  \egroup
  \ifingather@
    \restorealignstate@
    \egroup
    \nonumber
    \ifnum0=`{\fi\iffalse}\fi
  \else
    $$%
  \fi
  \ignorespacesafterend%
  \vspace{\alignbot}\par\noindent
}
\begin{document}

\title{Hedonic Coalition Formation for Distributed Task Allocation among Wireless Agents}
\author{Walid Saad, Zhu Han, Tamer Ba\c{s}ar, M\'{e}rouane Debbah, and Are Hj{\o}rungnes \vspace{-1.1cm} \thanks{W.~Saad and A. Hj{\o}rungnes are
with the UNIK Graduate University Center, University of Oslo, Oslo, Norway, e-mails: \texttt{\{saad,arehj\}@unik.no}.
Z.~Han is with Electrical and Computer Engineering Department, University of Houston, Houston, Tx, USA, email: \texttt{zhan2@mail.uh.edu}. T. Ba\c{s}ar is with the Coordinated Science Laboratory, University of Illinois at Urbana Champaign, IL, USA, email: \texttt{basar1@illinois.edu}. M.~Debbah is the Alcatel-Lucent chair, SUPELEC, Paris, France e-mail:
\texttt{merouane.debbah@supelec.fr}. This research is supported by the Research Council of Norway
 through the projects 183311/S10, 176773/S10 and 18778/V11, and by a grant from AFOSR.  A preliminary version of this work [35] appeared in the proceedings of the International Conference on Game Theory for Networks (GameNets), May, 2009. }}
\maketitle

\begin{abstract}
 Autonomous wireless agents such as unmanned aerial vehicles, mobile base stations, or self-operating
 wireless nodes present a great potential for deployment in next-generation wireless networks. While
 current literature has been mainly focused on the use of agents within robotics or software
 engineering applications, we propose a novel usage model for self-organizing agents suited to wireless networks. In the proposed model, a number of agents are required  to collect data from several arbitrarily located tasks. Each task represents a queue of packets that  require collection and subsequent wireless transmission by the agents to a central receiver. The
 problem is modeled as a \emph{hedonic coalition formation} game between the agents and the tasks that interact in order to form disjoint coalitions.  Each formed coalition is modeled as a polling  system consisting of a number of agents which move between the different  tasks present in the coalition, collect and transmit the packets.  Within each coalition, some agents can also take the role of a relay for improving the packet success rate of the transmission. The proposed algorithm allows the tasks and the agents to take distributed decisions to join or leave a coalition, based on the achieved benefit in terms of effective throughput, and the cost in terms of delay. As a result of these decisions, the agents and tasks structure themselves into independent disjoint coalitions which constitute a Nash-stable network partition. Moreover, the proposed algorithm allows the agents and tasks to adapt the topology to environmental changes such as the arrival/removal of tasks or the mobility of the tasks. Simulation results show how the proposed algorithm allows the agents and tasks to self-organize into independent coalitions, while improving the performance, in terms of average player (agent or task) payoff, of at least $30.26\%$ (for a network of $5$ agents with up to $25$~tasks) relatively to a scheme that allocates nearby tasks equally among agents.
\end{abstract}
{\small \textbf{Keywords:} wireless networks, multiagent systems, game theory, hedonic coalitions, task allocation,  ad hoc networks.}
\vspace{-0.1cm}
\section{Introduction}
Next generation wireless networks will present a highly complex and dynamic environment characterized by a large number of heterogeneous information sources, and a variety of distributed network nodes. This is mainly due to the recent emergence of large-scale, distributed, and heterogeneous communication systems which are continuously increasing in size, traffic, applications, services, etc. For maintaining a satisfactory operation of such networks, there is a constant need for dynamically optimizing their performance, monitoring their operation and reconfiguring their topology. Due to the ubiquitous nature of such wireless networks, it is inherent to have self-organizing autonomous nodes (agents), that can service these networks at different levels such as data collection, monitoring, optimization, maintenance, and others \cite{MF00,MF03,CR02,AL00,AL01,AL02}. These nodes can be owned by the authority maintaining the network, and must be able to survey large scale networks, and perform very specific tasks at different points in time, in an autonomous manner, with little reliance on any centralized authority \cite{MF00,MF03,CR02,AL00,AL01,AL02}.

While the use of autonomous agents has been well investigated in robotics, computer systems or software engineering, research models tackling the use of such agents in wireless and communication networks are few. However, recently, the need for such agents in wireless networks has become of noticeable importance as many next-generation networks encompass several types of wireless devices, such as cognitive devices or unmanned aerial vehicles~(UAVs), that are autonomous and self-adapting~\cite{MF00,MF03,CR02,AL00,AL01,AL02}. A key challenge in this area is the problem of task allocation among a group of agents that need to execute a number of tasks. This problem has been already tackled in  areas such as robotics control~\cite{UAV04,UAV05,UAV06} or software systems~\cite{UAV07,UAV09}. However, most of the existing models are unsuitable for task allocation in wireless networks due to many reasons: (i)- The task allocation problems studied in existing work are tailored for military operations, computer systems, or software engineering and, thus,  cannot be readily applied in models pertaining to wireless networks, (ii)- the tasks are generally considered as static abstract entities with very simple characteristics and no intelligence (e.g. the tasks are just points in a plane) which is a major limitation, and (iii)- the existing models do not consider any aspects of wireless networks such as the characteristics of the wireless channel, the data traffic, the need for wireless transmission, or other wireless-specific specifications. In this context, numerous applications in next-generation wireless networks require a number of agent-nodes to perform specific wireless-related tasks that emerge over time and are not pre-assigned. One example is the case where a number of wireless nodes are required to monitor the operation of the network or perform relaying at different times and locations \cite{MF00,MF03,AL00,AL01,AL02}. In such applications, the objective is to develop algorithms enabling the agents to autonomously share the tasks among each other. The main existing contributions within wireless networking in this area \cite{UAV00,ZH00,UAV01,UAV02,UAV03}, are focused on deploying UAVs which can act as self-deploying autonomous agents that can efficiently perform \emph{pre-assigned} tasks in applications such as connectivity improvement in ad hoc network \cite{ZH00}, routing \cite{UAV01,UAV02}, and medium access control \cite{UAV03}. However, these contributions focus on \emph{centralized} solutions for specific problems such as finding the optimal locations for the deployment of UAVs or devising efficient routing algorithms in ad hoc networks in the presence of UAVs. In these papers, the tasks that the agents must accomplish are \emph{pre-assigned and pre-determined}. In contrast, many applications in wireless networks require the agents to autonomously assign the tasks among themselves. Hence, it is inherent to devise algorithms, in the context of wireless networks, that allow an autonomous and distributed task allocation process among a number of \emph{wireless agents}\footnote{The term \emph{wireless agent} refers to any node that can act autonomously and can perform wireless transmission. Examples of wireless agents are UAVs \cite{ZH00}, mobile base stations \cite{AL00,AL01,AL02}, cognitive wireless devices \cite{CR02}, or self-deploying mobile relay stations \cite{MF03}.} with little reliance on centralized entities.

The main contribution of this paper is to propose a novel wireless-oriented model for the problem of task allocation among a number of autonomous agents. The proposed model considers a number of wireless agents that are required to collect data from arbitrarily located tasks. Each task represents a source of data, i.e., a queue with a Poisson arrival of packets, that the agents must collect and transmit via a wireless link to a central receiver. This formulation is deemed suitable to model several problems in next-generation networks such as video surveillance in wireless networks, self-deployment of mobile relays in IEEE 802.16j networks \cite{MF03}, data collection in ad hoc  and sensor networks \cite{AL02}, operation of mobile base stations in vehicular ad hoc networks \cite{AL00} and mobile ad hoc networks  \cite{AL01} (the so called \emph{message ferry} operation), wireless monitoring of randomly located sites, autonomous deployment of UAVs in military ad hoc networks, and many other applications. For allocating the tasks, we introduce a novel framework from coalitional game theory, known as \emph{hedonic coalition formation}. Albeit hedonic games have been widely used in game theory, to the best of our knowledge, no existing work utilized this framework in a communication or wireless environment. Thus, we model the task allocation problem as a hedonic coalition formation game between the agents and the tasks, and we introduce an algorithm for forming coalitions. Each formed coalition is modeled as a polling system consisting of a number of agents, designated as \emph{collectors}, which act as a single server that moves continuously between the different tasks (queues) present in the coalition, gathering and transmitting the collected packets to a common receiver. Further, within each coalition, some agents can act as \emph{relays} for improving the packet success rate during the wireless transmission. For forming coalitions, the agents and tasks can autonomously make a decision to join or leave a coalition based on well defined individual preference relations. These preferences are based on a coalitional value function that takes into account the benefits received from servicing a task, in terms of effective throughput (data collected), as well as the cost in terms of the polling system delay incurred from the time needed for servicing all the tasks in a coalition. We study the properties of the proposed algorithm, and show that it always converges to a Nash-stable network partition. Further, we investigate how the network topology can self-adapt to environmental changes such as the deployment of new tasks, the removal of existing tasks, and low mobility of the tasks. Simulation results show how the proposed algorithm allows the network to self-organize, while ensuring a performance improvement, in terms of average player (task or agent) payoff, compared to a scheme that assigns nearby tasks equally among the agents.

The remainder of this paper is organized as follows: Section~\ref{sec:sysmodel} presents and motivates the proposed system model. In Section~\ref{sec:gmodel}, we model the task allocation problem problem as a transferable utility coalitional game and propose a suited utility function. In Section~\ref{sec:gform}, we classify the task allocation coalitional game as a hedonic coalition formation game, we discuss its key properties and we introduce the algorithm for coalition formation. Simulation results are presented, discussed, and analyzed in Section~\ref{sec:sim}. Finally, conclusions are drawn in Section~\ref{sec:conc}.\vspace{-0.3cm}

\section{System Model}\label{sec:sysmodel}\vspace{-0.2cm}
Consider a network having $M$ wireless agents that belong to a single network operator and  that are controlled by a central command center (e.g., a central controller node or a satellite system). These agents are required to service $T$ tasks arbitrarily located in a geographic area that has an associated central wireless receiver connected to the command center. In general, the tasks are entities belonging to one or more independent owners\footnote{The scenario where all tasks and agents are owned by the same entity is a particular case of this generic model.}. The owners of the tasks can be, for example, service providers or third party operators. We denote the set of agents and tasks by $\mathcal{M}=\{1,\ldots,M\}$, and $\mathcal{T}=\{1,\ldots,T\}$, respectively. We consider only the case where the number of tasks is larger than the number of agents, hence, $T > M$. The main motivation behind this consideration is that, for most networks, the number of agents assigned to a specific area is generally small, e.g., due to cost factors. Each task $i \in \mathcal{T}$ represents an M/D/1 queueing system\footnote{Other queue types, e.g., M/M/1, can be considered without loss of generality.}, whereby packets of constant size $B$ are generated using a Poisson arrival with an average arrival rate of $\lambda_i$. Hence, we consider different classes of tasks each having its corresponding $\lambda_i$. The tasks we consider are sources of data that cannot send their information to the central receiver (and, subsequently, to the command center) without the help of an agent. These tasks can represent a group of mobile devices, such as sensors, video surveillance devices, or any other static or dynamic wireless nodes that have limited power and are unable to provide long-distance transmission. Such devices (tasks) need to buffer their data locally and wait to be serviced by an agent that can collect the data. For example, an agent such as a mobile station or a UAV can provide a line-of-sight link to facilitate the transmission from the tasks to the receiver. The tasks can also be mapped to any other source of packet data that require collection by an agent for transmission\footnote{The tasks can also be moving with a periodic low mobility.}. To service a task, each agent is required to move to the task location, collect the data, and transmit it using a wireless link to the central receiver. The command center periodically downloads this data from the receiver, e.g., through a backbone network. Each agent $i \in \mathcal{M}$ offers a link transmission capacity of $\mu_i$, in packets/second, using which the agent can service the tasks' data. The quantity $\frac{1}{\mu_i}$ thus represents the well-known service time for a single packet that is being serviced by agent $i$. The agent which is collecting the data from a task is referred to as \emph{collector}. In addition, each agent $i \in \mathcal{M}$ can transmit the data to the receiver with a maximum transmit power of $P_i=\tilde{P}$, assumed the same for all agents with no loss of generality.

The proposed model allows each task to be serviced by multiple agents, and also, each agent (or group of agents) to service multiple tasks. Whenever a task is serviced by multiple agents, each agent can act as either a \emph{collector} or a \emph{relay}. Any group of agents that act together for data collection from the same task, can be seen as a single collector with improved link transmission capacity. In this paper, we consider that the link transmission capacity depends solely on the capabilities of the agents and not on the nature of the tasks. In this context, given a group of agents $G \subseteq \mathcal{M}$ that are acting as collectors for any task, the total link transmission capacity with which tasks can be serviced with by $G$ can be given by
\begin{equation}\label{eq:srate}
\mu^{G} =  \sum_{j\in G}\mu_j.
\end{equation}

For forming a single collector, multiple agents can easily coordinate the data extraction and transmission from every task, so as to allow a larger link transmission capacity for the serviced task as in (\ref{eq:srate}). The transmission of the packets by the agents from a task $i \in \mathcal{T}$ to the central receiver is subject to packet loss due to the fading on the wireless channel. Thus, in addition to acting as collectors, some agents may act as \emph{relays} for a task. These relay-agents would locate themselves at equal distances from the task (given that the task is already being served by \emph{at least one} collector), and, hence, the collectors transmit the data to the receiver through multi-hop agents, improving the probability of successful transmission. In Rayleigh fading, the probability of successful transmission of a packet of size $B$ bits from the collectors present at a task $i \in \mathcal{T}$ through a path of $m$ agents, $Q_i=\{i_1,\ldots,i_{m}\}$, where $i_1 = i$ is the task being serviced, $i_{m}$ is the central receiver, and any other $i_h \in Q_i$ is a \emph{relay}-agent, is given by
\begin{equation}\label{eq:suc}
\textrm{Pr}_{i,\textrm{CR}} = \prod_{h=1}^{m-1} \textrm{Pr}_{i_h,i_{h+1}}^B,
\end{equation}
where $\textrm{Pr}_{i_h,i_{h+1}}$ is the the probability of successful transmission of a single bit from agent $i_h$ to agent (or the central receiver) $i_{h+1}$. This probability can be given by the probability of maintaining the SNR at the receiver above a target level $\nu_0$ as follows \cite{DM00}
\begin{equation}\label{eq:p}
\textrm{Pr}_{i,i_{h+1}} =\exp{\left(-\frac{\sigma^2\nu_0(D_{i_{h},i_{h+1}})^{\alpha}}{\kappa \tilde{P}}\right)},
\end{equation}
where $\sigma^2$ is the variance of the Gaussian noise, $\kappa$ is a path loss constant, $\alpha$ is the path loss exponent, $D_{i_{h},i_{h+1}}$ is the distance between nodes $i_{h}$ and $i_{h+1}$, and $\tilde{P}$ is the maximum transmit power of agent $i_h$.
\begin{figure}[!t]
\begin{center}
\includegraphics[angle=0,width=90mm]{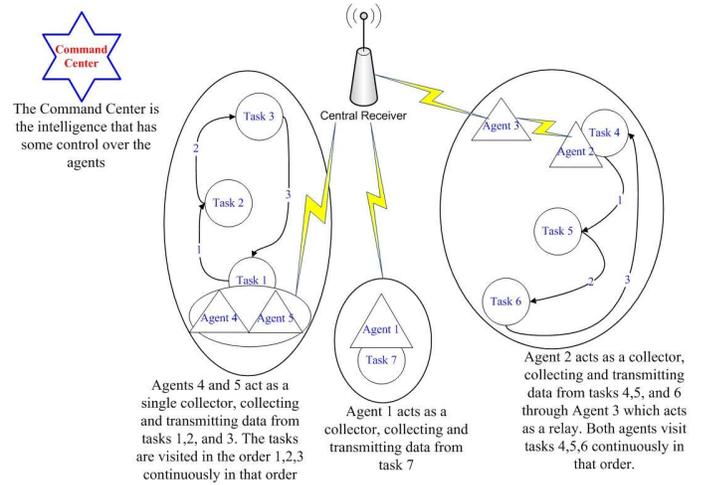}
\end{center}\vspace{-0.4cm}
\caption {An illustrative example of the proposed model for task allocation in wireless networks (the agents are dynamic, i.e., they move from one task to the other continuously).} \label{fig:fir}\vspace{-0.6cm}
\end{figure}

For servicing a number of tasks $C \subseteq \mathcal{T}$, a group of agents $G\subseteq \mathcal{M}$ (collectors and relays) can sequentially move from one task to the other in $C$ with a constant velocity $\eta$. The group $G$ of agents, servicing the tasks in $C$, stop at each task, with the collectors collecting and transmitting the packets using the relays (if any). The collectors would move from one task to the other, only if all the packets in the queue at the current task have been transmitted to the receiver (the process through which the agents move from one task to the other for data collection is cyclic). Simultaneously with the collectors, the relays also move, positioning themselves at equal distances on the line connecting the task being currently served by the collectors, and the central receiver. With this proposed model, the final network will consist of groups of tasks serviced by groups of agents, continuously. An illustration of this model is shown in Fig.~\ref{fig:fir}. Given this proposed model, the main objective is to provide an algorithm for distributing the tasks between the agents, given the operation of the agents previously described and shown in Fig.~\ref{fig:fir}.\vspace{-0.3cm} 

\section{Coalitional Game Formulation}\label{sec:gmodel}\vspace{-0.1cm}
\subsection{Game Formulation}
By inspecting Fig.~\ref{fig:fir}, one can clearly see that the task allocation problem among the agents can be mapped into the problem of the formation of coalitions. In this regard, coalitional game theory \cite[Ch.~9]{Game_theory2} provides a suitable analytical tool for studying the formation of cooperative groups, i.e., coalitions, among a number of players. For the proposed model, the coalitional game is played between the agents and the tasks. Hence, the players set for the proposed task allocation coalitional game is denoted by $\mathcal{N}$, and contains both agents and tasks, i.e., $\mathcal{N} = \mathcal{M} \cup \mathcal{T}$. Hereinafter, we use the term \emph{player} to indicate either a task or an agent.

For any coalition $S \subseteq \mathcal{N}$ containing a number of agents and tasks, the agents belonging to this coalition can structure themselves into collectors and relays. Subsequently, as explained in the previous section, within each coalition, the collector-agents will continuously move from one task to the other, stopping at each task, and transmitting all the packets available in the queue to the central receiver, through the relay-agents (if any). This proposed task servicing scheme can be mapped to a well-known concept that is ubiquitous in computer systems,  which is the concept of a \emph{polling system}\cite{PS00}. In a polling system, a single server moves between multiple queues in order to extract the packets from each queue, in a sequential and cyclic manner. Models pertaining to polling systems have been widely developed in various disciplines ranging from computer systems to communication networks, and different strategies for servicing the queues exist \cite{PS00,PS01,PS02}.  In the proposed model, the collectors of every coalition are considered as a single server that is servicing the tasks (queues) sequentially, in a cyclic manner, i.e., after servicing the last task in a coalition $S \subseteq \mathcal{N}$, the collectors of $S$ return to the first task in $S$ that they previously visited hence repeating their route continuously. Whenever the collectors stop at any task $i \in S$, they collect and transmit the data present at this task until the queue is empty. This method of allowing the server to service a queue until emptying the queue is known as the \emph{exhaustive} strategy for a polling system, which is applied at the level of every coalition $S \subseteq \mathcal{N}$ in our model. Moreover, the time for the server to move from one queue to the other is known as the \emph{switchover} time. Consequently, we highlight the following property:
\begin{property}
In the proposed task allocation model, every coalition $S \subseteq \mathcal{N}$ is a polling system with an \emph{exhaustive} polling strategy and deterministic non-zero \emph{switchover} times. In each such polling system $S$, the collector-agents are seen as the polling system server, and the tasks are the queues that the collector-agents must service.
\end{property}

For any coalition $S$, once the queue at a task $i \in S$ is emptied, the collectors and relays in a coalition move from task $i$ to the next task $j \in S$ with a constant velocity $\eta$, incurring a switchover time $\theta_{i,j}$. The switchover time in our model corresponds to the time it takes for all the agents (collectors and relays) to move from one task to the next. Assuming all agents start their mobility at the same time, this switchover time maps to the time needed for the farthest agent to move from one task to the next. Since we consider only straight line trajectories for collectors and relays, and due to the fact that the relays always position themselves at equal distances on the line connecting the tasks in a coalition to the receiver, we have the following property (clearly seen through the geometry of Fig.~\ref{fig:fir}).
 \begin{property}
Within any given coalition $S$, the switchover time between two tasks corresponds to the \emph{constant} time it takes for one of the \emph{collectors} to move from one of the tasks to the next.
\end{property}

Having modeled every coalition $S \subseteq \mathcal{N}$ as a polling system, we investigate the average delay incurred per coalition. For polling systems, finding exact expressions for the delay at every queue is a highly complicated task and no general closed-form expressions for the delay at every queue in a polling system can be found \cite{PS00,PS01}\footnote{Note that some approximations \cite{PS01} exist for polling systems under heavy traffic or large switchover times, but in our problem, they are not suitable as we require a more general delay expression.}. A key criterion used for the analysis of the delay incurred by a polling system is the \emph{pseudo-conservation} law that provides closed-form expressions for weighted sum of the means of the waiting times at the queues \cite{PS00,PS01}. For providing the pseudo-conservation law for a coalition $S \subseteq \mathcal{N}$ composed of a number of agents and a number of tasks, we make the following definitions. First, within coalition $S$, a group of agents  $G_S \subseteq S\cap\mathcal{M}$ are designated as collectors. Second, for each task $i \in S \cap \mathcal{T}$ with an average arrival rate of $\lambda_i$, and served by a number of collectors $|G_S|$ with a link transmission capacity of $\mu^{G_S}$ (as given by (\ref{eq:srate})), we define the utilization factor of task $i$ $\rho_i = \frac{\lambda_i}{\mu^{G_S}}$. Further, we define $\rho_S \triangleq \sum_{i \in S \cap \mathcal{T}} \rho_i$. Given these definitions, for a coalition $S$, the weighted sum of the means of the waiting times by the agents at all the tasks in the coalition are given by the pseudo-conservation law as follows \cite[Section.~VI-B]{PS01} (taking into account that our switchover and service times are deterministic)
\begin{align}\label{eq:pclaw}
\sum_{i \in S\cap \mathcal{T}} \rho_i \bar{W}_i = \rho_S \frac{\sum_{i\in S\cap \mathcal{T}} \frac{\rho_i}{\mu^{G_S}} }{2(1-\rho_S)} +  \rho_S \frac{\theta_S^{2}}{2}\\\nonumber
 + \frac{\theta_S}{2(1-\rho_S)}\left[\rho_S^2 - \sum_{i\in S\cap \mathcal{T}} \rho_i^2\right],
\end{align}
where $\bar{W}_i$ is the mean waiting time at task $i$ and $\theta_S = \sum_{h=1}^{|S\cap \mathcal{T}|}\theta_{i_h,i_{h+1}}$ is the sum of the switchover times given a path of tasks $\{i_1,\ldots,i_{|S\cap \mathcal{T}|}\}$ followed by the agents, with $i_h \in S \cap \mathcal{T},\ \forall\ h\in \{1,\ldots,|S\cap \mathcal{T}|\}$ and $i_{|S\cap \mathcal{T}|+1} = i_1$. The first term in the right hand side of (\ref{eq:pclaw}) is the well known expression for the average queueing delay for M/D/1 queues, weighed by $\rho_S$. The second and third terms in the right hand side of (\ref{eq:pclaw}) represent the average delay increase incurred by the travel time required for the collectors to move from one task to the other, i.e., the delay resulting from the switchover period. Further, for any coalition $S$ that must form in the system, the following condition must hold:
\begin{equation}\label{eq:polcond}
\rho_S < 1.
\end{equation}
This condition is a requirement for the stability of any polling system \cite{PS00,PS01,PS02} and, thus, must be satisfied for any coalition that will form in the proposed model. In the event where this condition is violated, the system is considered unstable and the delay is considered as infinite. In this regard, the analysis presented in the remainder of this paper will take into account this condition and its impact on the coalition formation process (as seen later, a coalition where $\rho_S \ge 1$ will never form).\vspace{-0.45cm}

\subsection{Utility Function}\vspace{-0.1cm}
In the proposed game, for every coalition $S \subseteq \mathcal{N}$, the agents must determine the order in which the tasks in $S$ are visited, i.e., the path $\{i_1,\ldots,i_{|S\cap \mathcal{T}|}\}$ which is an ordering over the set of tasks in $S$ given by $S \cap \mathcal{T}$. Naturally, the agents must select the path that minimizes the total switchover time for one round of data collection. This can be mapped to the following well-known problem:
  \begin{property}\label{prop:sales}
  The problem of finding the path that minimizes the total switchover time for one round of data collection within a coalition $S \subseteq \mathcal{N}$ is mapped into the \emph{traveling salesman} problem \cite{TSP00}, where a salesman, i.e., the agents $S \cap \mathcal{M}$, is required to minimize the time of visiting a series of cities, i.e., the tasks $S\cap \mathcal{T}$.
  \end{property}

It is widely known that the solution for the traveling salesman problem is NP-complete~\cite{TSP00}, and, hence, there has been numerous heuristic algorithms for finding an acceptable near-optimal solution. One of the simplest of such algorithms is the nearest neighbor algorithm (also known as the greedy algorithm) \cite{TSP00}. In this algorithm, starting from a given city the salesman chooses the closet city as his next visit. Using the nearest neighbor algorithm, the ordering of the cities which minimizes the overall route is selected. The nearest neighbor algorithm is sub-optimal but it can quickly find a near-optimal solution with a small computational complexity (linear in the number of cities) \cite{TSP00} which makes it suitable for problems such as the proposed task allocation problem. Therefore, in the proposed model, for every coalition $S$, the agents can easily work out the nearest neighbor route for the tasks, and operate according to it.

Having modeled each coalition as a polling system, the pseudo-conservation law in (\ref{eq:pclaw}) allows to evaluate the cost, in terms of average waiting time (or delay), from forming a particular coalition. However, for every coalition, there is a benefit, in terms of the average effective throughput that the coalition is able to achieve. The average effective throughput for a coalition $S$ is given by
\begin{equation}\label{eq:effthr}
L_{S} = \sum_{i \in S\cap \mathcal{T}} \lambda_i  \cdot \textrm{Pr}_{i,\textrm{CR}},
\end{equation}
with $\textrm{Pr}_{i,\textrm{CR}}$ given by (\ref{eq:suc}). By closely inspecting (\ref{eq:srate}), one can see that adding more collectors improves the transmission link capacity, and, thus, reduces the service time that a certain task perceives. Based on this property and by using (\ref{eq:pclaw}) one can easily see that, adding more collectors, i.e., improving the service time, reduces the overall delay in (\ref{eq:pclaw})  \cite{PS00,PS01,PS02}. Further, adding more relays would reduce the distance over which transmission is occurring, thus, improving the probability of successful transmission as per (\ref{eq:suc}) \cite{AL00,DM00}. In consequence, using (\ref{eq:effthr}), one can see that this improvement in the probability of successful transmission is translated into an improvement in the effective throughput. Hence, each agent role (collector or relay) possesses its own benefit for a coalition.

A suitable criterion for characterizing the utility in networks that exhibit a tradeoff between the throughput and the delay is the concept of system \emph{power} which is defined as the ratio of some power of the throughput and the delay (or a power of the delay) \cite{KL00}. Hence, the concept of power is an attractive notion that allows to capture the fundamental tradeoff between throughput and delay in the proposed task allocation model. Power has been used thoroughly in the literature in applications that are sensitive to throughput as well as delay \cite{PW00,PW01,PW02}. Mainly, for the proposed game, the utility of every coalition $S$ is evaluated using a coalitional value function based on the power concept from \cite{PW02} as follows
\begin{equation}\label{eq:util}
v(S) =  \begin{cases} \delta \frac{L_{S}^{\beta}}{(\sum_{i \in S\cap \mathcal{T}} \rho_i \bar{W}_i)^{(1-\beta)}}, & \mbox{if } \rho_S < 1\ \textrm{and} \ |S| > 1,\\ 0, &\mbox{otherwise},\end{cases}
\end{equation}
where $\beta \in (0,1)$ is a throughput-delay tradeoff parameter. In (\ref{eq:util}), the term $\delta$ represents the price per unit power that the network offers to coalition $S$. Hence, $\delta$ represents a generic control parameter that allows the network operator to somehow monitor the behavior of the players. The use of such control parameters is prevalent in game theory \cite{CF00,CF01,HC00,HC01,HC02}. In certain scenarios, $\delta$ would represent physical monetary values paid by the operator to the different entities (agents and tasks). In such a case, on one hand, for the tasks, the operator simply would pay the tasks' owners for the amount of data (and its corresponding quality as per (\ref{eq:util})) each one of their tasks had generated. On the other hand, for the agents, the payment would, for example, represent either a reward for the behavior of the agents or the proportion of maintenance or servicing that each agent would receiver from its operator. In this sense, the utility function in (\ref{eq:util}) would, thus, represents the total revenue achieved by a coalition $S$, given the network power that coalition $S$ obtains. For coalitions that consist of a single agent or a single task, i.e., coalitions of size $1$, the utility assigned is $0$ due to the fact that such coalitions generate no benefit for their member (a single agent can collect no data unless it moves to task, while a single task cannot transmit any of its generated data without an agent collecting this data). Further, any coalition where condition (\ref{eq:polcond}) is not satisfied is also given a zero utility, since, in this case, the polling system that the coalition represents is unstable, and hence has an infinite delay.

Given the set of players $\mathcal{N}$, and the value function in (\ref{eq:util}), we define a coalitional game $(\mathcal{N},v)$ with transferable utility~(TU). The utility in (\ref{eq:util}) represents the amount of money or revenue received by a coalition, and, hence, this amount can be arbitrarily apportioned between the coalition members, which justifies the TU nature of the game. To divide this utility between the players, we adopt the equal fair allocation rule, where the payoff of any player $i \in S$, denoted by $x_i^S$ is
\begin{equation}\label{eq:eqfair}
x_i^S = \frac{v(S)}{|S|}.
\end{equation}

The payoff $x_i^S$ represents the amount of revenue that player $i \in S$ receives from the total revenue $v(S)$ that coalition $S$ generates. The main motivation behind adopting the equal fair allocation rule is in order to highlight the fact that the agents and the tasks value each others equally. As seen in (\ref{eq:util}), the presence of an agent in a coalition is crucial in order for the tasks to obtain any payoff, and, vice versa, the presence of a task in a coalition is required for the agent to be able to obtain any kind of utility. Nonetheless, the proposed model and algorithm can accommodate any other type of payoff allocation rule. Although in traditional coalitional games, the allocation rule may have a strong impact on the game's solution, for the proposed game, other allocation rules can be used with little impact on the analysis that is presented in the rest of the paper. This is due to the class of the proposed game which is quite different from traditional coalitional games. As seen from (\ref{eq:pclaw}) and (\ref{eq:util}), whenever the number of tasks in a coalition increases, the total delay increases, hence, reducing the utility from forming a coalition. Further, in a coalition where the number of tasks is large, the condition of stability for the polling system, as given by (\ref{eq:polcond}), can be violated due to heavy traffic incoming from a large number of tasks yielding a zero utility as per (\ref{eq:util}). Hence, forming coalitions between the tasks and the agents entails a cost that can limit the size of a coalition. In this regard, traditional solution concepts for TU games, such as the core \cite{Game_theory2}, may not be applicable. In fact, in order for the core to exist, a TU coalitional game must ensure that the grand coalition, i.e., the coalition of all players will form. However, as seen in Fig.~\ref{fig:fir} and corroborated by the utility in (\ref{eq:util}), in general, due to the cost for coalition formation, the grand coalition will not form. Instead, independent and disjoint coalitions appear in the network as a result of the task allocation process. In this regard, the proposed game is classified as a \emph{coalition formation game} \cite{CF00,CF01,HC00,HC01,HC02}, and the objective is to find an algorithm that allows to form the coalition structure, instead of finding only a solution concept, such as the core, which aims mainly at stabilizing a grand coalition of all players.\vspace{-0.2cm}

\section{Task Allocation as a Hedonic Coalition Formation Game}\label{sec:gform}\vspace{-0.1cm}

\subsection{Hedonic Coalition Formation: Concepts and Model}\vspace{-0.1cm}
Coalition formation games have been a topic of high interest in game theory \cite{CF00,CF01,HC00,HC01,HC02}. Notably, in \cite{HC00,HC01,HC02}, a class of coalition formation games known as \emph{hedonic coalition formation games} is investigated. This class of games entails several interesting properties that can be applied, not only in economics such as in \cite{HC00,HC01,HC02}, but also in wireless networks as we will demonstrate in this paper. The two key requirements for classifying a coalitional game as a \emph{hedonic} game are \cite{HC00}:
\begin{condition} - (Hedonic Conditions) - A coalition formation game is classified as hedonic if
 \begin{enumerate}
 \item The payoff of any player depends \emph{solely} on the members of the coalition to which the player belongs.
 \item The coalitions form as a result of the \emph{preferences} of the players over their possible coalitions' set.
     \end{enumerate}
     \end{condition}

These two conditions characterize the framework of hedonic games. Mainly, the term \emph{hedonic} pertains to the first condition above, whereby the payoff of any player $i$, in a hedonic game, must depend only on the identity of the players in the coalition to which player $i$ belongs, with no dependence on the other players. For the second condition, in the remainder of this section, we will formally define how the preferences of the players over the coalitions can be used for the formation process. To use hedonic games in the proposed model, we first introduce some definitions, taken from \cite{HC00}.
\begin{definition} A \emph{coalition structure} or a \emph{coalition partition} is defined as the set $\Pi = \{S_1,\ldots,S_l\}$ which partitions the players set $\mathcal{N}$, i.e.,  $ \forall\ k\ ,S_k \subseteq \mathcal{N}$ are disjoint coalitions such that $\cup_{k=1}^{l}S_k = \mathcal{N}$ (an example partition $\Pi$ is shown in Fig.~\ref{fig:fir}).
 \end{definition}
 \begin{definition}
 Given a partition $\Pi$ of $\mathcal{N}$, for every player $i\in \mathcal{N}$ we denote by $S_{\Pi}(i)$, the coalition to which player $i$ belongs, i.e., coalition $S_{\Pi}(i)=S_k \in \Pi$, such that $i \in S_k$.
 \end{definition}

In a hedonic game, each player must build preferences over its own set of possible coalitions. Hence, each player must be able to compare the coalitions and order them based on which coalition  prefers being a member of. To evaluate these players' preferences over the coalitions, we define the concept of a preference relation or order as follows \cite{HC00}:
\begin{definition}
For any player $i\in \mathcal{N}$, a \emph{preference relation} or \emph{order} $\succeq_i$ is defined as a complete, reflexive, and transitive binary relation over the set of all coalitions that player $i$ can possibly form, i.e., the set $\{S_k \subseteq \mathcal{N} : i \in S_k\}$.
\end{definition}

Thus, for a player $i \in \mathcal{N}$, given two coalitions $S_1 \subseteq \mathcal{N}$ and, $S_2 \subseteq \mathcal{N}$ such that $i \in S_1$ and $i \in S_2$,  $S_1\succeq_i S_2$ indicates that player $i$ prefers to be part of coalition $S_1$, over being part of coalition $S_2$, or at least, $i$ prefers both coalitions equally. Further, using the asymmetric counterpart of $\succeq_i$, denoted by $\succ_i$, then $S_1 \succ_i S_2$, indicates that player $i$ \emph{strictly} prefers being a member of $S_1$ over being a member of $S_2$. For every application, an adequate preference relation $\succeq_i$ can be defined to allow the players to quantify their preferences. The preference relation can be a function of many parameters, such as the payoffs that the players receive from each coalition, the weight each player gives to other players, and so on. Given the set of players $\mathcal{N}$, and a preference relation $\succeq_i$ for every player $i \in \mathcal{N}$, a hedonic coalition formation game is formally defined as follows \cite{HC00}:
\begin{definition}
A hedonic coalition formation game is a coalitional game that satisfies the two hedonic conditions previously prescribed, and is defined by the pair $(\mathcal{N},\succ)$ where $\mathcal{N}$ is the set of players ($|\mathcal{N}|=N$), and $\succ$ is a \emph{profile of preferences}, i.e., preference relations, $(\succeq_1,\ldots,\succeq_N)$ defined for every player in $\mathcal{N}$.
\end{definition}

Having defined the main components of hedonic coalition formation games, we utilize this framework in order to provide a suitable solution for the task allocation problem proposed in Section~\ref{sec:sysmodel}. The proposed task allocation problem is modeled as a $(\mathcal{N},\succ)$ hedonic game, where $\mathcal{N}$ is the set of agents and tasks and $\succ$ is a profile of preferences that we will shortly define. For the proposed game model, given a network partition $\Pi$ of $\mathcal{N}$, the payoff of any player $i$, depends only on the identity of the members of the coalition to which $i$ belongs. In other words, the payoff of any player $i$ depends solely on the players in the coalition in which player $i$ belongs $S_{\Pi}(i)$ (easily seen through the formulation of Section~\ref{sec:gmodel}). Hence, our game verifies the first hedonic condition.

Moreover, to model the task allocation problem as a hedonic coalition formation game, the preference relations of the players must be clearly defined. In this regard, we define two types of preference relations, a first type suited for indicating the preferences of the agents, and a second type suited for the tasks. Subsequently, for evaluating the preferences of any agent $i \in \mathcal{M}$, we define the following operation (this preference relation is common for all agents, hence we denote it by $\succeq_i\ =\ \succeq_{\mathcal{M}},\ \forall i \in \mathcal{M}$)
\begin{equation}\label{eq:prefuav}
S_2 \succeq_{\mathcal{M}} S_1 \Leftrightarrow u_i(S_2) \ge u_i(S_1),
\end{equation}
where $S_1 \subseteq \mathcal{N}$ and $S_2 \subseteq \mathcal{N}$ are any two coalitions that contain agent $i$, i.e., $i \in S_1$ and $i \in S_2$ and $u_i:2^{\mathcal{N}}\rightarrow \mathbb{R}$ is a preference function defined for any agent $i$ as follows

\begin{equation}\label{eq:pref1}
u_i(S) = \begin{cases} \infty, & \mbox{if } S=S_\Pi(i)\ \& \  S\setminus\{i\} \subseteq \mathcal{T}, \\ 0, & \mbox{if } S \in h(i),\\ x_i^S. &\mbox{otherwise}, \end{cases}
\end{equation}
where  $\Pi$ is the \emph{current} coalition partition which is in place in the game, $x_i^S$ is the payoff received by player $i$ from any division of the value function among the players in coalition $S$ such as the equal fair division given in (\ref{eq:eqfair}), and $h(i)$ is the history set of player $i$. At any point in time, the history set $h(i)$ is a set that contains the coalitions that player $i$ was a part of in the past, prior to the formation of the current partition $\Pi$. Note that, by using the defined preference relation, the players can compare any two coalitions $S_1$ and $S_2$ independently of whether these two coalitions belong to $\Pi$ or not.

The main rationale behind the preference function $u_i$ in (\ref{eq:pref1}) is as follows. In this model, the agents, being entities owned by the operator, seek out to achieve two conflicting objectives: (i)- Service all tasks in the network for the benefit of the operator,  and (ii)- Maximize the quality of service, in terms of power as per (\ref{eq:util}), for extracting the data from the tasks. The preference function $u_i$ must be able to allow the agents to make coalition formation decisions that can capture this tradeoff between servicing all tasks (at the benefit of the operator) and achieving a good quality of service (at the benefit of both agents and operator). For this purpose, as per (\ref{eq:pref1}),  any agent $i$ that is the \emph{sole} agent servicing tasks in its current coalition $S=S_\Pi(i)$ such that $S_\Pi(i)\cap \mathcal{M} = \{i\}$, assigns an infinite preference value to $S$. Hence, in order to service all tasks, the agent always assigns a maximum preference to its current coalition, if this current coalition is composed of only tasks and does not contain other agents. This case of the preference function $u_i$ forbids the agent from leaving a group of tasks, already assigned to it, unattended by other agents. In this context, this condition pertains to the fist objective (objective (i) previously mentioned) of the agents and it implies that, whenever there is a risk of leaving tasks without service, the agent do not act selfishly, in contrast, they act in the benefit of the operator and remain with these tasks, independent of the payoff generated by these tasks. Such a decision  allows an agent to avoid making a decision that can incur a risk of ultimately having tasks with no service in the network, in which case, the network operator would lose revenue from these unattended tasks and it may, for example, decide to replace the agent that led to the presence of such a group of tasks with no service. Otherwise, the agents' preference relation $u_i$ would highlight the second objective of the agent, i.e., maximize its own payoff which maps into the revenue generated from the quality of service, i.e., the power as per (\ref{eq:util}). with which the tasks are being serviced. In this case, the preference relation is easily generated by the agents by comparing the value of the payoffs  they receive from the two coalitions $S_1$ and $S_2$. Further, we note that no agent has any incentive to revisit a coalition that it had left previously, and hence, the agents assign a preference value of $0$ for any coalition in their history (this can be seen as a basic learning process). In summary, taking into account the conflicting goals of the agents, between two coalitions $S_1$ and $S_2$, an agent $i$ prefers the coalition that gives the better payoff, given that the agent is not alone in its current coalition, and the coalition with a better payoff is not in the history of agent $i$.

To evaluate the preferences of any task $j \in \mathcal{T}$, we define the following operation (this preference relation is common for all tasks, hence, we denote it by $\succeq_j\ =\ \succeq_{\mathcal{T}},\ \forall j \in \mathcal{T}$)
 \begin{equation}\label{eq:preftask}
S_2 \succeq_{\mathcal{T}} S_1 \Leftrightarrow w_j(S_2) \ge w_j(S_1),
\end{equation}
with the tasks' preference function $w_j$ defined as follows.
\begin{equation}\label{eq:pref2}
w_j(S) = \begin{cases} 0, & \mbox{if } S \in h(j),\\ x_j^S, &
\mbox{otherwise}. \end{cases}
\end{equation}
The preferences of the tasks are easily captured using the function $w_j$. The preference function of the tasks is different from that of the agents since the tasks are, in general, independent entities that act solely in their own interest. Thus, based on (\ref{eq:pref2}), each task prefers the coalition that provides the larger payoff $x_j^S$ unless this coalition was already visited previously and left. In that case, the preference function of the tasks assigns a preference value of $0$ for any coalition that the task had already visited in the past (and left to join another coalition). Using this preference relation, every task can evaluate its preferences over the possible coalitions that the task can form. Consequently, the proposed task allocation model verifies both hedonic conditions, and, hence, we have:
\begin{property}
The proposed task allocation problem among the agents is modeled as a $(\mathcal{N},\succ)$ hedonic coalition formation game, with the preference relations given by (\ref{eq:prefuav}) and (\ref{eq:preftask}).
\end{property}

Note that the preference relations in  (\ref{eq:prefuav}) and (\ref{eq:preftask}) are also dependent on the underlying TU coalitional game described in Section~\ref{sec:gmodel}. Having formulated the problem as a hedonic game, the final task is to provide a distributed algorithm, based on the defined preferences, for forming the coalitions. However, prior to deriving the algorithm for coalition formation, we highlight the following result:
\begin{proposition}\label{prop:property2}
For the proposed hedonic coalition formation model for task allocation, assuming that all collector-agents have an equal link transmission capacity $\mu_i = \mu, \ \forall i \in \mathcal{M}$, any coalition $S \subseteq \mathcal{N}$ with $| S \cap \mathcal{M}|$ agents, must have at least $|G_S|_\textrm{min}$ collector-agents ($G_S \subseteq S \cap \mathcal{M}$) as follows
\begin{equation}
|G_S| > |G_S|_\textrm{min} = \frac{\sum_{i\in S\cap \mathcal{T}} \lambda_i }{ \mu}.
\end{equation}
Further, when all the tasks in $S$ belong to the same class, we have
\begin{equation}
|G_S|_\textrm{min} = \frac{ |S \cap \mathcal{T}| \cdot \lambda}{ \mu},
\end{equation}
 which constitutes an upper bound on the number of collector-agents as a function of the number of tasks $|S \cap \mathcal{T}|$ for a given coalition $S$.
\end{proposition}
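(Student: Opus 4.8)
The plan is to obtain the bound as a direct consequence of the polling-system stability condition (\ref{eq:polcond}), which every coalition that actually forms is required to satisfy. The first thing I would stress is that $\rho_S < 1$ is not merely desirable but \emph{necessary} for formation: by the value function (\ref{eq:util}), any coalition with $\rho_S \ge 1$ is assigned zero utility and hence never forms. Thus $\rho_S < 1$ may be treated as a hard constraint on every surviving coalition $S$, and the entire argument reduces to substituting the definitions into this single inequality.

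Next I would expand $\rho_S = \sum_{i \in S \cap \mathcal{T}} \rho_i$ with $\rho_i = \lambda_i / \mu^{G_S}$, and invoke the equal-capacity hypothesis $\mu_i = \mu$ for all $i \in \mathcal{M}$. Under this hypothesis the aggregate collector capacity (\ref{eq:srate}) collapses to $\mu^{G_S} = \sum_{j \in G_S}\mu = |G_S|\,\mu$, giving
\[
\rho_S = \frac{\sum_{i \in S \cap \mathcal{T}} \lambda_i}{|G_S|\,\mu}.
\]
Since $|G_S| \ge 1$ and $\mu > 0$, the denominator is strictly positive, so imposing $\rho_S < 1$ and clearing it preserves the inequality direction and yields $|G_S|\,\mu > \sum_{i \in S \cap \mathcal{T}} \lambda_i$, i.e. $|G_S| > \big(\sum_{i \in S \cap \mathcal{T}} \lambda_i\big)/\mu = |G_S|_{\textrm{min}}$. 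This establishes the first claim.

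For the same-class specialization I would set $\lambda_i = \lambda$ for every task $i \in S \cap \mathcal{T}$, so that the numerator becomes $|S \cap \mathcal{T}|\,\lambda$ and $|G_S|_{\textrm{min}} = |S \cap \mathcal{T}|\,\lambda/\mu$, exhibiting the stated linear growth in the number of tasks. The relationship tying collectors to tasks then follows by combining this threshold with the feasibility constraint $|G_S| \le |S \cap \mathcal{M}|$ (collectors are drawn from the agents of $S$): the chain $|S \cap \mathcal{T}|\,\lambda/\mu < |G_S| \le |S \cap \mathcal{M}|$ simultaneously lower-bounds the collectors needed and caps how many equal-class tasks a fixed set of agents can sustain.

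I expect no genuine analytical obstacle, since the statement is essentially a reformulation of the stability requirement; the result is tight precisely because (\ref{eq:polcond}) is the binding constraint. The only step deserving explicit care is the opening remark that $\rho_S < 1$ holds for \emph{every} coalition under consideration rather than being an extra assumption, as this is what turns the inequality into a necessary structural property of all formed coalitions; I would state this first so the remainder reads as a clean substitution.
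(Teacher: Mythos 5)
Your proposal is correct and follows essentially the same route as the paper's own proof: both derive the bound by treating the polling-system stability condition $\rho_S < 1$ as a necessary property of any coalition that forms (since violating it yields zero utility), substituting $\mu^{G_S} = |G_S|\,\mu$ under the equal-capacity hypothesis, and clearing the denominator. The only addition beyond the paper is your closing remark chaining the bound with $|G_S| \le |S \cap \mathcal{M}|$, which is a harmless and natural observation.
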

\begin{proof}
 As per the defined preference relations in (\ref{eq:pref1}) and (\ref{eq:pref2}), any coalition that will form in the proposed model must be stable since no agent or task has an incentive to join an unstable coalition, hence, we have, for every coalition $S \subseteq \mathcal{N}$ having $|G_S|$ collectors with $G_S \subseteq S\cap \mathcal{M}$, we have from (\ref{eq:polcond})  $\rho_S < 1$, and thus
 \begin{equation*}
 \sum_{i \in S \cap \mathcal{T}} \frac{\lambda_i}{\mu^{G_S}} < 1,
 \end{equation*}
 which, given the assumption that $\mu_i = \mu, \forall i \in \mathcal{M}$ yields
  \begin{equation*}
 \frac{1}{|G_S|\cdot\mu}\cdot \sum_{i \in S \cap \mathcal{T}} \lambda_i < 1,
 \end{equation*}
 which yields
$|G_S| > |G_S|_\textrm{min} = \frac{\sum_{i\in S\cap \mathcal{T}} \lambda_i}{ \mu}$.
 Further, if we assume that all the tasks belong to the same class, hence, $\lambda_i = \lambda, \ \forall i \in S \cap \mathcal{T}$, we immediately get
 \begin{equation}
|G_S|_\textrm{min} = \frac{|S \cap \mathcal{T}| \cdot \lambda}{ \mu }.
\end{equation}
\end{proof}
Consequently, for any proposed coalition formation algorithm, the bounds on the number of collector-agents in any coalition $S$ as given by Proposition~\ref{prop:property2} will always be satisfied.

\subsection{Hedonic Coalition Formation: Algorithm}\vspace{-0.2cm}
In the previous subsection, we modeled the task allocation problem as a hedonic coalition formation game and, thus, the remaining objective is to devise an algorithm for forming the coalitions. While literature that studies the characteristics of existing partitions in hedonic games, such as in \cite{HC00,HC01,HC02}, is abundant, the problem of forming the coalitions both in the hedonic and non-hedonic setting is a challenging problem \cite{CF00}. In this paper, we introduce an algorithm for coalition formation that allows the players to make \emph{selfish} decisions as to which coalitions they decide to join at any point in time. In this regard, for forming coalitions between the tasks and the agents, we propose the following rule for coalition formation:
\begin{definition}\label{def:switch}
\textbf{Switch Rule -} Given a partition $\Pi=\{S_1,\ldots,S_l\}$ of the set of players (agents and tasks) $\mathcal{N}$, a Player $i$ decides to leave its current coalition $S_{\Pi}(i)=S_m,\ $ for some $m \in \{1,\ldots,l\}$ and join another coalition $S_k \in \Pi \cup \{\emptyset\},\ S_k \neq S_{\Pi}(i)$, if and only if $S_k \cup \{i\} \succ_i S_{\Pi}(i)$. Hence, $\{S_m,S_k\} \rightarrow \{S_m\setminus\{i\},S_k\cup\{i\}\}$.
\end{definition}

Through a single switch rule made by any player $i$, any current partition $\Pi$ of $\mathcal{N}$ is transformed into $\Pi^{\prime} = (\Pi \setminus \{S_m,S_k\}) \cup \{S_m\setminus\{i\},S_k\cup\{i\}\}$. In simple terms, for every partition $\Pi$, the switch rule provides a mechanism through which any task or agent, can leave its current coalition $S_{\Pi}(i)$, and join another coalition $S_k \in \Pi$, given that the new coalition $S_k \cup \{i\}$ is strictly preferred over $S_{\Pi}(i)$ through any preference relation that $i$ is using (in particular using the preference relations defined in (\ref{eq:prefuav}) and (\ref{eq:preftask})). Independent of the preference relations selected, the switch rule can be seen as a \emph{selfish} decision made by a player, to move from its current coalition to a new coalition, regardless of the effect of this move on the other players. Furthermore, we consider that, whenever a player decides to switch from one coalition to another, the player updates its history set $h(i)$. Hence, given a partition $\Pi$, whenever a player $i$ decides to leave coalition $S_m \in \Pi$ to join a different coalition, coalition $S_m$ is automatically stored by player $i$ in its history set $h(i)$.

Consequently, we propose a coalition formation algorithm composed of three main phases: Task discovery, hedonic coalition formation, and data collection. In the first phase, the central command receives information about the existence of tasks that require servicing and informs the agents of the locations and characteristics of the tasks (e.g., the arrival rates). Hence, the agents start by having full knowledge of the initial partition $\Pi_{\textrm{initial}}$. Once the agents are aware of the tasks, they broadcast their own presence to the tasks. Consequently, the players can interact with each other, for performing coalition formation. Hence, the second phase of the algorithm is the hedonic coalition formation phase. In this phase, all the players (tasks and agents) investigate the possibility of performing a switch operation. For identifying potential switch operations, given complete knowledge about the network (which can be gathered in different methods as will be discussed in Subsection~\ref{sec:impl})), each agent investigates its top preference, and decides to perform a switch operation, if possible through (\ref{eq:prefuav}).  As one can easily see in (\ref{eq:util}), for the proposed model, no coalition composed of tasks-only would ever form since such a coalition would always generate a $0$ utility. Therefore, the tasks are only interested in switching to coalitions that contain at least a single agent.
From a tasks' perspective, for determining its preferred switch operation, each task needs only to negotiate with existing agents in order to enquire on the amount of utility it can obtain by joining with this agent. By doing so, each task can determine the switch operation it is interested in making at a given time. We consider that, the players make sequential switch decisions (the order of switch operations is referred to as the \emph{order of play} hereinafter). For any agent, a switch operation is easily performed as the agent can leave its current coalition and join the new coalition, if (\ref{eq:prefuav}) is satisfied. For the tasks, any task that finds out a possibility to switch, can autonomously request, over a control channel with the concerned agent, to be added to the coalition of interest (which would always contain at least one agent with whom the task previously negotiated). The convergence of the proposed algorithm during the hedonic coalition formation phase is guaranteed as follows:
\begin{theorem}\label{th:one}
Starting from any initial network partition $\Pi_{\textrm{initial}}$, the proposed hedonic coalition formation phase of the proposed algorithm always converges to a final network partition $\Pi_f$ composed of a number of disjoint coalitions.
\end{theorem}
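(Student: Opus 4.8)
The plan is to establish termination by a finiteness-plus-monotonicity argument, since the ``disjoint coalitions'' part of the claim is automatic: every partition $\Pi$ is by definition a collection of disjoint coalitions covering $\mathcal{N}$, so it suffices to show that the sequence of partitions generated by successive switch operations is finite. First I would observe that the set of all partitions of the finite player set $\mathcal{N}$ is itself finite (its cardinality is the Bell number $B_N$). Each application of the switch rule in Definition~\ref{def:switch} maps the current partition $\Pi$ to a uniquely determined new partition $\Pi' = (\Pi \setminus \{S_m, S_k\}) \cup \{S_m \setminus \{i\}, S_k \cup \{i\}\}$, so the hedonic phase produces a well-defined sequence $\Pi_{\textrm{initial}} = \Pi_0, \Pi_1, \Pi_2, \ldots$. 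Convergence then reduces to proving that this sequence cannot continue indefinitely, equivalently that it reaches a partition at which no player can execute a further strictly-improving switch.

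The engine of the argument is the history set together with the strictness built into the switch rule. By (\ref{eq:prefuav})--(\ref{eq:pref2}), a player that has previously left a coalition $S$ records $S$ in its history and thereafter assigns it value $0$ through $u_i$ or $w_j$; since all payoffs $x_i^S = v(S)/|S|$ are non-negative and every switch requires a \emph{strict} preference gain, no player will ever voluntarily switch back into a coalition it has abandoned. I would use this to argue that no partition can recur: each switch is irreversible at the level of the initiating player, because rebuilding an earlier partition would force some player to re-form a coalition already in its history, which the zero valuation forbids. Since the partition space is finite and no partition recurs, the sequence $\{\Pi_t\}$ must be finite and hence terminate at some $\Pi_f$.

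The step I expect to be the main obstacle is making the ``no partition recurs'' claim airtight, because a player's coalition membership can change without that player acting: when some other player $j$ joins or leaves $i$'s coalition, $i$ is passively moved to $S_{\Pi}(i) \cup \{j\}$ or $S_{\Pi}(i)\setminus\{j\}$, and could in principle be pushed into a coalition lying in $h(i)$. To close this gap I would track the monotone quantity $\sum_{i \in \mathcal{N}} |h(i)|$ and argue that a genuine recurrence of the global partition would require at least one player to \emph{voluntarily} re-enter a blacklisted coalition, which (\ref{eq:pref1}) and (\ref{eq:pref2}) rule out, so the abandoned coalition appended at each switch is fresh for the mover. Bounding the total number of switches by the finite quantity $N \cdot 2^{N-1}$ (the number of distinct coalitions a given player can ever abandon) then yields termination, and the resulting $\Pi_f$ is, by construction, a partition of $\mathcal{N}$ into disjoint coalitions, completing the proof.
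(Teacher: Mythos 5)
Your overall strategy coincides with the paper's: the paper also notes that the set of partitions of $\mathcal{N}$ is finite (Bell number) and argues that every switch operation produces a partition not previously visited, so the sequence of switches must terminate. Where you genuinely diverge is in what you count. The paper counts \emph{partitions} and simply asserts, ``by inspecting the preference relations,'' that no partition recurs; you instead identify the exact reason this assertion is delicate --- a player's coalition can change \emph{passively} when others join or leave, so a player could in principle find itself back inside a configuration it never chose --- and you propose to count \emph{history entries}, using the monotonicity of $\sum_{i\in\mathcal{N}}|h(i)|$ and the bound $N\cdot 2^{N-1}$ on the number of coalitions any player can abandon. That is a cleaner and more honest accounting than the paper's, since it does not require global non-recurrence of partitions, only that each mover's abandoned coalition be new to its own history.

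However, your patch does not fully close the hole you yourself identified. The freshness of the abandoned coalition is exactly what can fail under passive membership changes: if player $i$ leaves $S$ (recording $S$ in $h(i)$) and the other members of $S$ later reassemble around $i$ through their own switches --- each of which can be individually rational for \emph{them}, since $S$ need not lie in \emph{their} histories --- then $i$ sits once more in a coalition already in $h(i)$, and its next departure adds nothing to $|h(i)|$, so your counter stalls and the claimed strict monotonicity breaks. To make the argument airtight you would need either to rule out such reassembly (e.g., by showing that reconstructing $S$ around $i$ forces some member of $S$ to re-enter a coalition in its own history), or to replace the counter with a quantity that provably increases at every switch. To be fair, the paper's one-line assertion that ``every switch leads to a new partition'' suffers from the same unaddressed corner case; you have located the soft spot more precisely than the paper does, but neither argument as written eliminates it.
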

\begin{proof}
For the purpose of this proof, we denote $\Pi_{n_{k}}^{k}$ as the partition formed during the time $k$ when player $i\in \mathcal{N}$ decides to act after $n_{k}$ switch operations have previously occurred (the index $n_{k}$ denotes the number of switch operations performed by one or more players up to time $k$).  Given any initial starting partition $\Pi_{\textrm{initial}}=\Pi^{1}_{0}$, the hedonic coalition formation phase of the proposed algorithm consists of a sequence of switch operations. As per Definition~\ref{def:switch}, every switch operation transforms the current partition $\Pi$ into another partition $\Pi^\prime$, hence, hedonic coalition formation consists of a sequence of switch rules, yielding, e.g., the following transformations
\begin{equation}\label{eq:trans}
\Pi^{1}_{0}=\Pi^{2}_{0}\rightarrow \Pi^{3}_{1} \rightarrow \ldots \rightarrow \Pi^{L}_{n_{L}} \ldots\rightarrow \ldots \rightarrow \Pi^{T}_{n_T},
\end{equation}
where the operator $\rightarrow$ indicate the occurrence of a switch operation. In other words, $\Pi_{n_{k}}^{k} \rightarrow \Pi_{n_{k+1}}^{k+1}$, implies that during turn $k$, a certain player $i$ made a single switch operation which yielded a new partition $\Pi_{n_{k+1}}^{k+1}$ at the turn $k+1$. By inspecting the preference relations defined in (\ref{eq:prefuav}) and (\ref{eq:preftask}), it can be seen that every single switch operation leads to a partition that has not yet been visited (new partition). Hence, for any two partitions $\Pi_{n_{k}}^{k}$ and $\Pi_{n_{l}}^{l}$ in the transformations of (\ref{eq:trans}), such that $n_{k} \neq n_{l}$, i.e., $\Pi_{n_{l}}^{l}$ is a result of the transformation of $\Pi_{n_{k}}^{k}$ (or vice versa) after a number of switch operations $|n_l - n_k|$, we have that $\Pi_{n_{k}}^{k} \neq \Pi_{n_{l}}^{l}$ for any two turns $k$ and $l$.

Given this property and the well known fact that the number of partitions of a set is \emph{finite} and given by the Bell number \cite{CF00}, the number of transformations in (\ref{eq:trans}) is finite, and, hence, the sequence in (\ref{eq:trans}) will always terminate and converge to a final partition $\Pi_f = \Pi^{T}_{n_{T}}$ which completes the proof.
\end{proof}
 \begin{table}[!t]
\scriptsize
  \centering
  \caption{
    \vspace*{-0.2em}The proposed hedonic coalition formation algorithm for task allocation in wireless networks.}\vspace*{-1.5em}
    \begin{tabular}{p{8cm}}
      \hline
      \textbf{Initial State} \vspace*{.5em} \\
      \hspace*{1em} The network is partitioned by $\Pi_{\textrm{initial}}=\{S_1,\ldots,S_k\}$. At the beginning of all time $\Pi_{\textrm{initial}}$ = $\mathcal{N}$ = $\mathcal{M} \cup \mathcal{T}$ with no tasks being serviced. \vspace*{.1em}\\
\textbf{Three Phases for the Proposed Hedonic Coalition Formation Algorithm} \vspace*{.1em}\\
\hspace*{1em}\emph{Phase I - Task Discovery:}   \vspace*{.1em}\\
\hspace*{1.5em}a) The command center is informed by one or multiple owners about the\vspace*{.1em}\\
\hspace*{1.5em}existence and characteristics of new tasks.\vspace*{.1em}\\
\hspace*{1.5em}b) The central command center conveys the information on the initial network\vspace*{.1em}\\
\hspace*{1.5em}partition $\Pi_{\textrm{initial}}$ using the methods of Subsection~\ref{sec:impl}\vspace*{.1em}\\
\hspace*{1em}\emph{Phase II - Hedonic Coalition Formation:}   \vspace*{.1em}\\
\hspace*{3em}\textbf{repeat}\vspace*{.2em}\\
\hspace*{3em}For every player $i \in \mathcal{N}$, given a current partition $\Pi_{\textrm{current}}$\vspace*{.2em}.\vspace*{.2em}\\
\hspace*{4em}a) Player $i$ investigates possible switch using the preferences\vspace*{.2em}\\
\hspace*{4em}given, respectively, by (\ref{eq:prefuav}) and (\ref{eq:preftask}) for the agents and tasks.\vspace*{.2em}\\
\hspace*{4em}b) Player $i$ performs the switch operation that maximizes its payoff:\vspace*{.2em}\\
\hspace*{5em}b.1) Player $i$ updates its history $h(i)$ by adding $S_{\Pi_{\textrm{current}}}(i)$.\vspace*{.2em}\\
\hspace*{5em}b.2) Player $i$ leaves its current coalition $S_{\Pi_{\textrm{current}}}(i)$.\vspace*{.2em}\\
\hspace*{5em}b.3) Player $i$ joins the new coalition that maximizes its payoff.\vspace*{.2em}\\
\hspace*{3em}\textbf{until} convergence to a final Nash-stable partition $\Pi_{f}$.\vspace*{.2em}\\
\hspace*{1em}\emph{Phase III - Data Collection}   \vspace*{.1em}\\
\hspace*{2em}a) The network is partitioned using $\Pi_{\textrm{final}}$.\vspace*{.1em}\\
\hspace*{2em}b) The agents in each coalition $S_k \in \Pi_{\textrm{final}}$ \emph{continuously} perform the\vspace*{.1em}\\
\hspace*{2em}following operations, i.e., act as a polling system with exhaustive strategy\vspace*{.1em}\\
\hspace*{2em}and switchover times:\vspace*{.1em}\\
\hspace*{3em}b.1) Visit a first task in their respective coalitions.\vspace*{.1em}\\
\hspace*{3em}b.2) The collector-agents collect the data from the task being visited.\vspace*{.1em}\\
\hspace*{3em}b.3) The collector-agents transmit the data using wireless links to the\vspace*{.1em}\\
\hspace*{3em}central receiver either directly or using other relay-agents.\vspace*{.1em}\\
\hspace*{3em}b.4) Once the queue of the current is empty, visit the next task.\vspace*{.1em}\\
\hspace*{2em}The order in which the tasks are visited is determined by the nearest \vspace*{.1em}\\
\hspace*{2em}neighbor solution to the traveling salesman problem as in Property~\ref{prop:sales}.\vspace*{.1em}\\
\hspace*{2em}This third phase is continuously repeated and performed by all the agents\vspace*{.1em}\\
\hspace*{2em}in $\Pi_{\textrm{final}}$ for a fixed period of time $\Psi$ (for static environments $\Psi=\infty$).\vspace*{.1em}\\
\hspace*{1em}\emph{Adaptation to environmental changes (periodic process)}   \vspace*{.1em}\\
\hspace*{2em}a) In the presence of environmental changes, such as the deployment of\vspace*{.1em}\\
\hspace*{2em}new tasks, the removal of existing tasks, or periodic low mobility of\vspace*{.1em}\\
\hspace*{2em}the tasks, the third phase of the algorithm is performed continuously\vspace*{.1em}\\
\hspace*{2em}only for a \emph{fixed} period of time $\Psi$.\vspace*{.1em}\\
\hspace*{2em}b) After $\Psi$ elapses, the first two phases are repeated to allow the players\vspace*{.1em}\\
\hspace*{2em}to self-organize and adapt the network to these environmental changes.\vspace*{.1em}\\
\hspace*{2em}c) This process is repeated periodically for networks where environmental\vspace*{.1em}\\
\hspace*{2em}changes may occur.\vspace*{.1em}\\
   \hline
    \end{tabular}\label{tab:alg}\vspace{-0.7cm}
\end{table}

The stability of the final partition $\Pi_f$ resulting from the convergence of the proposed algorithm can be studied using the following stability concept from hedonic games  \cite{HC00}:
\begin{definition}
A partition $\Pi = \{S_1,\ldots,S_l\}$ is  \emph{Nash-stable} if $\forall i \in \mathcal{N},\  S_{\Pi}(i) \succeq_i S_k \cup \{i\}$ for all $S_k \in \Pi \cup \{\emptyset\}$ (for agents $\succeq_i = \succeq_{\mathcal{M}}, \forall i \in \mathcal{N} \cap \mathcal{M}$ and for tasks $\succeq_i = \succeq_{\mathcal{T}}, \forall i \in \mathcal{N} \cap \mathcal{T}$).
\end{definition}
In other words, a coalition partition $\Pi$ is Nash-stable, if no player has an incentive to move from its current coalition to another coalition in $\Pi$ or to deviate and act alone. 
Furthermore, a Nash-stable partition $\Pi$ implies that there does not exist any coalition $S_k \in \mathcal{N}$ such that a player $i$ strictly prefers to be part of $S_k$ over being part of its current coalitions, while all players of $S_k$ do not get hurt by forming $S_k \cup \{i\}$. This is the concept of individual stability, which is formally defined as follows \cite{HC00}:
\begin{definition}
A partition $\Pi = \{S_1,\ldots,S_l\}$ is \emph{individually stable} if there do not exist $i \in \mathcal{N},$ and a coalition $S_k \in \Pi \cup \{\emptyset\}$ such that $  S_k \cup \{i\} \succ_i S_{\Pi}(i)$ and $  S_k \cup \{i\} \succeq_j S_k$ for all $j \in S_k$ (for agents $\succeq_i = \succeq_{\mathcal{M}}, \forall i \in \mathcal{N} \cap \mathcal{M}$ and for tasks $\succeq_i = \succeq_{\mathcal{T}}, \forall i \in \mathcal{N} \cap \mathcal{T}$ for tasks).
\end{definition}
As already noted, a Nash-stable partition is individually stable \cite{HC00}. For the hedonic coalition formation phase of the proposed algorithm, we have the following:
\begin{proposition}\label{prop:one}
Any partition $\Pi_f$ resulting from the hedonic coalition formation phase of the proposed algorithm is Nash-stable, and, hence, individually stable.
\end{proposition}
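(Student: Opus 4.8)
The plan is to prove the result by contradiction, leveraging the convergence guarantee of Theorem~\ref{th:one} together with the ``if and only if'' nature of the switch rule in Definition~\ref{def:switch}. First I would invoke Theorem~\ref{th:one} to assert that the hedonic coalition formation phase terminates at a well-defined final partition $\Pi_f$, reached after a finite sequence of switch operations. The core observation driving the proof is that termination of Phase~II is, by construction, equivalent to the statement that no player can execute a profitable switch at $\Pi_f$; making this equivalence precise is the crux of the argument.

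Next I would suppose toward a contradiction that $\Pi_f$ is \emph{not} Nash-stable. By the definition of Nash-stability, this means there exists a player $i \in \mathcal{N}$ and a coalition $S_k \in \Pi_f \cup \{\emptyset\}$ with $S_k \cup \{i\} \succ_i S_{\Pi_f}(i)$, where $\succ_i$ is the agent relation from (\ref{eq:prefuav}) if $i$ is an agent and the task relation from (\ref{eq:preftask}) if $i$ is a task. By Definition~\ref{def:switch}, the existence of such a strictly preferred coalition is \emph{exactly} the condition under which player $i$ performs a switch operation, transforming $\Pi_f$ into the distinct partition $(\Pi_f \setminus \{S_{\Pi_f}(i), S_k\}) \cup \{S_{\Pi_f}(i)\setminus\{i\}, S_k \cup \{i\}\}$. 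Hence at least one further switch would be available at $\Pi_f$, contradicting that $\Pi_f$ is the terminal partition at which the iteration has stopped. Therefore $S_{\Pi_f}(i) \succeq_i S_k \cup \{i\}$ holds for every player $i$ and every $S_k \in \Pi_f \cup \{\emptyset\}$ (the case $S_k = S_{\Pi_f}(i)$ being immediate by reflexivity of $\succeq_i$), which is precisely Nash-stability.

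The main subtlety to handle carefully is the history-dependence of the preference functions $u_i$ and $w_j$ in (\ref{eq:pref1}) and (\ref{eq:pref2}): one might worry that the zero-valuation assigned to previously visited coalitions opens a gap between ``no profitable switch'' and genuine stability. I would dispel this by emphasizing that both the switch rule and the Nash-stability condition are evaluated under the \emph{same} preference relations as they stand at $\Pi_f$, i.e., with the history sets $h(i)$ accumulated up to convergence. Because the two conditions share identical preferences, the logical equivalence goes through verbatim; the history sets only affect \emph{which} coalitions a player still ranks favorably, never the equivalence itself. Finally, I would close by invoking the standard implication from hedonic game theory \cite{HC00}, already noted in the text, that every Nash-stable partition is individually stable, which yields the second assertion of the proposition immediately.
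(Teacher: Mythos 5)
Your proof is correct and follows essentially the same route as the paper's: assume $\Pi_f$ is not Nash-stable, observe that the witnessing player could then perform a switch operation, and contradict the convergence established in Theorem~\ref{th:one}, with individual stability following from the standard implication in \cite{HC00}. Your additional remark on the history-dependence of the preference relations is a reasonable clarification but does not change the argument.
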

\begin{proof}
For any partition $\Pi$, no player (agent or task) $i \in \mathcal{N}$ has an incentive to leave its current coalition, and act alone as per the utility function in (\ref{eq:util}). Assume that the partition $\Pi_f$ resulting from the proposed algorithm is not Nash-stable. Hence, there exists a player $i \in \mathcal{N}$, and a coalition $S_k \in \Pi_f$ such that $S_k \cup \{i\} \succ_i S_{\Pi_f}(i)$, hence, player $i$ can perform a \emph{switch} operation which contradicts with the fact that $\Pi_f$ is the result of the convergence of the proposed algorithm (Theorem~\ref{th:one}). Thus, any partition $\Pi_f$ resulting from the hedonic coalition formation phase of the proposed algorithm is Nash-stable, and, hence, by \cite{HC00}, this resulting partition is also individually stable.
\end{proof}
Following the formation of the coalitions and the convergence of the hedonic coalition formation phase to a Nash-stable partition, the last phase of the algorithm entails the actual data collection by the agents. In this phase, the agents move from one task to the other, in their respective coalitions, collecting the data and transmitting it to the central receiver, similar to a polling system, as explained in Sections~\ref{sec:sysmodel} and \ref{sec:gmodel}. A summary of the proposed algorithm is shown in Table~\ref{tab:alg}.

The proposed algorithm, as highlighted in Table~\ref{tab:alg}, can adapt the network topology to environmental changes such as the deployment of new tasks, the removal of a number of existing tasks, or a periodic low mobility of the tasks (in the case where the tasks represent mobile sensor devices for example). For this purpose, the first two phases of the algorithm shown in Table~\ref{tab:alg} are repeated periodically over time, to adapt to any changes that occurred in the environment. With regards to mobility, we only consider the cases where the tasks are mobile for a fixed period of time with a
velocity that is smaller than that of the agents $\eta$. In the presence of such a mobile environment, the central command center, through Phase~I of the algorithm in Table~\ref{tab:alg} informs the agents of the new tasks locations (periodically) and, thus, during Phase~II of the proposed algorithm, both agents and tasks can react to the environment changes, and modify the existing topology. As per Theorem~\ref{th:one} and Proposition~\ref{prop:one}, regardless of the starting position, the players will always self-organize into a Nash-stable partition, even after mobility, the deployment of new tasks or the removal of existing tasks. In summary, in a changing environment, the first two phases of the algorithm in Table~\ref{tab:alg} are repeated periodically, after a certain  fixed period of time $\Psi$ has elapsed during which the players were involved in Phase~III and the actual data collection and transmission occurred. Finally, whenever a changing environment is considered, the players are also allowed to periodically clear their history, so as to allow them to explore all the new possibilities that the changes in the environment may have yielded.\vspace{-0.3cm}

\subsection{Distributed Implementation Possibilities}\label{sec:impl}\vspace{-0.2cm}
 For implementation, as shown in Fig.~\ref{fig:fir}, we clearly distinguish between two inherently different entities: The command center, which is the intelligence that has some control over the agents and the central receiver which is a node in the network that is connected to the command center and which would receive the data transmitted by the agents (this distinction can be, for example, analogous to the distinction between a radio network controller and a base station in cellular networks). In practice, the central command can be, for example, a node that owns a number of agents and controls a large area which is divided into smaller areas with each area represented by the illustration of Fig.~\ref{fig:fir}. Hence, each such small area is a region having its own central receiver and where a subset of agents needs to operate and perform coalition formation using our model. In other scenarios, the command center can also be a satellite system that controls groups of agents with each group deployed in a different area (notably when the agents are UAVs for example). In contrast, the central receiver is simply a wireless node that receives the data from the agents and, subsequently, the command center can obtain this data from all receivers in its controlled area (e.g., through a backbone)\footnote{Our model can accommodate the case in which the command center and the central receiver coincide, e.g., in a small single-area network.}.

For performing coalition formation, the agents and tasks are required to know different types of information. In order to perform a switch operation, each agent is required to obtain data on the location of the tasks (hence, consequently deducing the hop distance $D_{ij}$ between any two tasks $i$ and $j$) as well as on the arrival rates $\lambda_i, i \in \mathcal{T}$ of these tasks. As a first step, whenever a tasks' owner (e.g., a service provider or a third party) requires that its tasks be serviced, it will give the details and characteristics of these tasks to the network operator (through service-level agreements for example) which would enter these details into the command center. Subsequently, the command center can insert this information into appropriate databases that the agents can access through, for example, an Internet connection. Such a transfer of information through active databases has been recently utilized in many communication architectures, for example, in cognitive radio network for primary user information distribution \cite{SD00}, or in UAVs operation \cite{DA01}. In cases where the command center controls only a single set of agents and a single area, this information can be, instead, broadcast directly to the agents. Further, the agents are also required to know the capabilities of each others, notably, the link transmission capacity $\mu_i, \forall i \in \mathcal{M}$ and the velocity (which can be used to deduce the switchover times). As the agents are all owned by a single operator, this information can be easily fed to the agents at the beginning of all time prior to their deployment, and, thus, does not require any additional communication during coalition formation.

From the tasks perspective, the amount of information that needs to be known is much less than that of the agents, notably since the tasks are, in general, resource-limited entities. For instance, as mentioned in the previous section, for performing coalition formation, the tasks do not need to know about the existence or the characteristics of each others. The main information that needs to be known by the tasks is the actual presence of agents. The agents can initially announce/broadcast their presence to the tasks as soon as they enter into the network. Subsequently, the tasks need only to be able to enquire, over a control channel, about the potential utility they would receive from joining the coalition of a particular agent (which can contain other tasks or agents but this is transparent from the perspective of the tasks). The main reason for this is that the tasks have no benefit in forming coalitions that have no agents since such coalitions generate $0$ utility for the tasks. Hence, from the point of view of the tasks, they would see every agent as a black box which can provide a certain payoff (communicated over a control channel during negotiation phase), and, based on this, they decide to join the coalition one or another agent (if multiple agents are in the same coalition then they would offer the same benefit from the tasks perspective). Note that, for coalitions that contain multiple agents, the task needs only to ask \emph{one} agent about their potential utility. In fact, this agent can append, along with the information on the utility, a signal to the tasks about other agents that belong to the same coalition. By doing so, the tasks would no longer need to assess whether to join a coalition by enquiring from other agents that belong to the same coalition, i.e., having redundant information. Hence, by sending this additional information, the agent will enable the tasks to avoid doing multiple processing for the same enquiry.

Given the information that needs to be known by each player, the proposed algorithm can be implemented in a distributed way since the switch operations can be performed by the tasks or the agents independently of any centralized entity. In this regard, given a partition $\Pi$, in order to determine its preferred switch operation, an agent would assess the payoff it would obtain by joining with any coalition in $\Pi$, except for singleton coalitions composed of agents only. For the tasks, given $\Pi$, each task negotiates with only the agents (and the coalition to which they belong) in the network in order to evaluate its payoff and decide on a switch operation. By adopting a distributed implementation, one would reduce the overhead and computational load on the command center, notably when this command center is controlling numerous areas with different groups of agents (each such area is represented by the model of Fig.~\ref{fig:fir}). Further, the distributed approach allows to decentralize the intelligence, and, thus, reduces the detrimental effects  on the network and the tasks' owners that can be caused by failures or malicious behavior at the command center level. It is also important to note that the distributed approach complies better with the nature of both the agents and the tasks. On one hand, the agents are inherently autonomous nodes (partially controlled by the command center) that need to operate on their own and, thus, make distributed decisions \cite{MF00,MF03,CR02,AL00,AL01,AL02}. On the other hand, the tasks are independent entities that belong to different owners. Consequently, the tasks are apt to make their own decisions regarding coalition formation and are, generally, unwilling to accept a coalitional structure imposed by an external entity such as the command center. Nonetheless, we note that a centralized approach can also be adopted for the proposed algorithm notably in small networks where, for example, the command center coincides with the central receiver and owns all the tasks.

Regarding complexity, the main complexity lies in the switch operation, the solution to the traveling salesman problem, i.e., determining the order in which the tasks are visited within a coalition in order to evaluate the utility function, and the assignment of agents as either collectors or relays. For instance, given a present coalitional structure $\Pi$ where each coalition in $\Pi$ has \emph{at least one task}, for every agent, the computational complexity of finding its next coalition, i.e., performing a switch operation, is easily seen to be $O(|\Pi|)$, and the worst case scenario is when all the tasks act alone, in that case $|\Pi| = T$. In contrast, for the tasks, the worst case complexity is $O(M)$ since, in order to make a switch operation, the tasks need only to negotiate with agents. With regards to the traveling salesman solution, the complexity of the used nearest neighbor solution is well known to be linear in the number of cities, i.e., tasks \cite{TSP00}, hence, for a coalition $S_k \in \Pi$, the complexity of finding the traveling salesman solution is simply $O(|S_k\cap \mathcal{T}|)$, where $S_k \cap \mathcal{T}$ is the set of tasks inside coalition $S_k$.  During coalition formation, i.e., Phase~II of the algorithm, whenever a coalition $S_k$ is formed, this coalition needs to compute its own traveling salesman problem, which has a linear complexity $O(|S_k\cap \mathcal{T}|)$, as already mentioned. Certainly, the overall number of traveling salesman problems that should be solved also depends on the number of \emph{new} coalitions  that were potentially evaluated for coalition formation prior to convergence to the Nash-stable partition. Hence, the number of traveling salesman solutions that need to be computed is proportional to the number of coalitions (and the identity and number of the tasks within) that negotiated a potential coalition formation \emph{prior} to convergence. This certainly depends on the number of iterations till convergence and the number of switch operations that occurred. Nonetheless, for static environments, \emph{after coalition formation ends}, i.e., in Phase~III of the algorithm in Table~\ref{tab:alg}, the traveling salesman solution needs to be computed only once \emph{for each coalition} and, afterwards, the network can operate indefinitely (if the environment is static) without any need for the coalitions to re-compute the traveling salesman solution.

Additionally, for determining whether a agent acts as a collector or relay within any coalition, we consider that the players would compute this configuration by inspecting all combinations and selecting the one that maximizes the utility in (\ref{eq:util}). This computation is done during coalition formation for evaluating the potential utility, and, upon convergence, is maintained during network operation. As the number of agents in a single coalition is generally small, this computation is straightforward, and has reasonable complexity. Finally, in dynamic environments, as the algorithm is repeated periodically and since we consider only periodic low mobility, the complexity of hedonic coalition formation is comparable to the static environment case, but with more runs of the algorithm.\vspace{-0.2cm}

\section{Simulation Results and Analysis}\label{sec:sim}\vspace{-0.2cm}
For simulations, the following network is set up: A central receiver is
placed at the origin of a $4$~km $\times 4$~km square area with the tasks appearing in the area around it. The path loss parameters are set to $\alpha=3$ and $\kappa = 1$, the target SNR is set to $\nu_0=10$~dB, the pricing factor is set to $\delta=1$, and the noise variance $\sigma^2=-120$~dBm. All packets are considered of size $256$~bits which is a typical IP packet size. The agents are considered as having a constant velocity of $\eta=60$~km/h, a transmit power of $\tilde{P}=100$~mW, and a transmission link capacity of $\mu = 768$~kbps (assumed the same for all agents). Further, we consider two classes of tasks in the network. A first class that can be mapped to voice services having an arrival rate of $32$~kbps, and a second class that can be mapped to video services, such as the widely known Quarter Common Intermediate Format~(QCIF)  \cite{ST00}, having an arrival rate $128$~kbps. Tasks belonging to each class are generated with equal probability in the simulations. Unless stated otherwise, the throughput-delay tradeoff parameter $\beta$ is set to $0.7$, to indicate services that are reasonably delay tolerant. 
\begin{figure}[!t]
\begin{center}
\includegraphics[width=80mm]{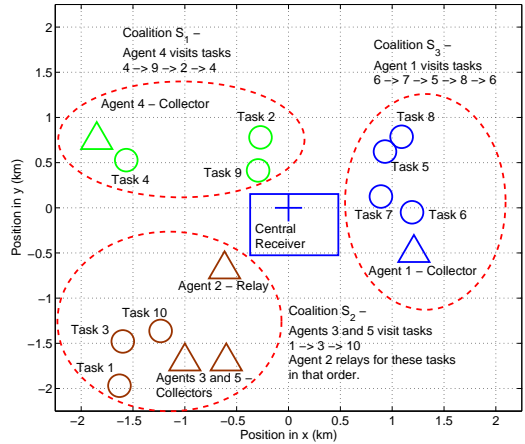}
\end{center}\vspace{-0.7cm}
\caption {A snapshot of a final coalition structure resulting from the proposed algorithm for a network of $M=5$~agents and $T=10$~tasks. In every coalition, the agents (collectors and relays) visit the tasks continuously in the shown order.} \label{fig:snapshot}\vspace{-0.7cm}
\end{figure}

In Fig.~\ref{fig:snapshot}, we show a snapshot of the final network partition reached through the proposed hedonic coalition formation algorithm for a network consisting of $M=5$~agents and $T=10$~arbitrarily located tasks. In this figure, Tasks $1$, $3$, and $8$ belong to the QCIF video class with an arrival rate of $128$~kbps, while the remaining tasks belong to the voice class with an arrival rate of $32$~kbps. In Fig.~\ref{fig:snapshot}, we can easily see how the agents and tasks can agree on a partition whereby a number of agents service a group of nearby tasks for data collection and transmission.  For the network of Fig.~\ref{fig:snapshot}, the tasks are distributed into three coalitions, two of which (coalitions $S_1$ and $S_3$) are served by a single collector-agent. In contrast, coalition $S_2$ is served by two collectors and one relay. The agents in coalition $2$ distributed their roles (relay or collector) depending on the achieved utility. For instance, for coalition $S_2$, having two collectors and one relay provides a utility of  $v(S_2)=52.25$ while having three collectors yields a utility of $v(S_2)=10.59$ , and having one collector and two relays yields a utility of $v(S_2)=45.19$. As a result, the case of two collectors and one relay maximizes the utility and is agreed upon between the players. Further, the coalitions in Fig.~\ref{fig:snapshot} are dynamic, in the sense that, within each coalition, the agents move from one task to the other, collecting and transmitting data to the receiver continuously. The order in which the agents visit the tasks, as indicated in Fig.~\ref{fig:snapshot}, is generated using a nearest neighbor solution for the traveling salesman problem as given by Property~\ref{prop:sales}.  For example, consider coalition $S_2$ in Fig.~\ref{fig:snapshot}. In this coalition, agents $3$ an $5$ act as a single collector and move from task $1$, to task $3$, to task $10$, and then back to task $1$ repeating these visits in a cyclic manner. Concurrently with the collectors movement, agent $2$ of coalition $S_2$, moves and positions itself at the middle of the line connecting the task being serviced by agents $3$ and $5$ to the central receiver. In other words, when the collectors are servicing task $1$ agent $2$ is at the middle of the line connecting task $1$ to the central receiver, subsequently when the collectors are servicing task $3$ agent $2$ takes position at the middle of the line connecting task $3$ to the central receiver and so on. Finally, note that, for all the coalitions in Fig.~\ref{fig:snapshot} one can verify that the minimum number of collectors, as per Proposition~\ref{prop:property2} is approximately $1$, (e.g., for coalition $S_2$, we have $|G_{S_2}|_{\textrm{min}}= \frac{9}{24}$ thus $1$ collector is a minimum), and, hence, this condition is easily satisfied by the coalition formation process.
\begin{figure}[!t]
\begin{center}
\includegraphics[width=80mm]{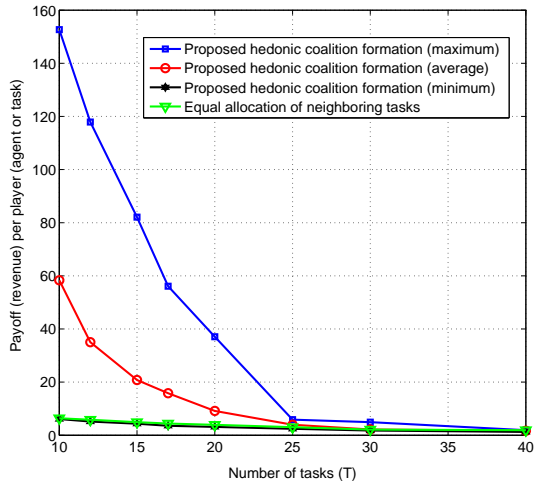}
\end{center}\vspace{-0.5cm}
\caption {Performance statistics, in terms of  maximum, average and minimum (over the order of play) player payoff, of the proposed algorithm compared to an algorithm that allocates the neighboring tasks equally among the agents as the number of tasks increases for $M=5$~agents.} \label{fig:perf}\vspace{-0.8cm}
\end{figure}

In Fig.~\ref{fig:perf}, we assess the performance of the proposed algorithm, in terms of the payoff (revenue) per player (agent or task) for a network having $M=5$~agents, as the number of tasks increases. The figure shows the statistics (averaged over the random positions of the tasks), in terms of maximum, average, and minimum over the random order of play. We compare the performance with an algorithm that assigns the tasks equally among the agents (i.e., an equal group of neighboring tasks are assigned for every agent). Fig.~\ref{fig:perf} shows that the performance of both algorithms is bound to decrease as the number of tasks increases. This is mainly due to the fact that, for networks having a larger number of tasks, the delay incurred per coalition, and, thus, per user increases. This increase in the delay is not only due to the increase in the number of tasks, but also to the increase in the distance that the agents need to travel within their corresponding coalitions (increase in switchover times).  In Fig.~\ref{fig:perf}, we note that the minimum payoff achieved by the proposed algorithm is comparable to that of the equal allocation. Hence, the performance of the proposed algorithm is clearly lower bounded by the equal allocation algorithm. However, Fig.~\ref{fig:perf}  shows  that the average and maximum payoff resulting by the proposed algorithm is significantly better than the equal allocation at all network sizes up to $T=25$~tasks. Albeit this performance improvement decreases with the increase in the number of tasks, the performance, in terms of average payoff per player, yielded by the proposed algorithm is no less than $30.26\%$ better than the equal allocation for up to $T=25$~tasks. Beyond $T=25$~tasks, Fig.~\ref{fig:perf} shows that the average and maximum performance of the proposed algorithm is comparable to that of the equal allocation, notably at $T=40$~tasks. The reduction in the performance gap between the two algorithms for large networks stems from the fact that, as more tasks exist in the network, for a fixed number of agents, the possibility of forming large coalitions, using the proposed algorithm is reduced, and, hence, the structure becomes closer to equal allocation.
\begin{figure}[!t]
\begin{center}
\includegraphics[width=83mm]{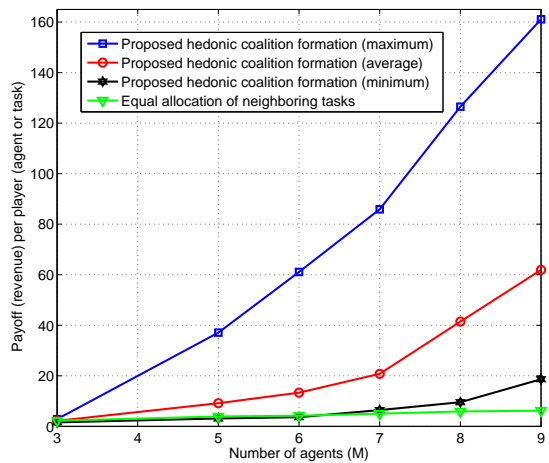}
\end{center}\vspace{-0.55cm}
\caption {Performance statistics, in terms of  maximum, average and minimum (over the order of play) player payoff, of the proposed algorithm compared to an algorithm that allocates the neighboring tasks equally among the agents as the number of agents increases for $T=20$~tasks.} \label{fig:perfage}\vspace{-0.7cm}
\end{figure}

In Fig.~\ref{fig:perfage}, we show the statistics (averaged over the random positions of the tasks), in terms of maximum, average, and minimum (over the random order of play) payoff per player for a network with $T=20$~tasks as the number of agents $M$ increases. The performance is once again compared with an algorithm that assigns the tasks equally among the agents (i.e., an equal group of neighboring tasks are assigned for every agent). Fig.~\ref{fig:perfage} shows that the performance of both algorithms increases as the number of agents increases. This is mainly due to the fact that when more agents are deployed, the tasks can be better serviced as the delay incurred per coalition decreases and the probability of successful transmission improves. For instance, as more agents enter the network, they can act as either collectors (for improving the delay) or relays (for improving the success probability). We note that, at $M=3$, the performance statistics of the proposed algorithm converge to the equal allocation algorithm since, for a small number of agents, the flexibility of forming coalitions is quite restricted and equal allocation is the most straightforward coalitional structure. Nonetheless, Fig.~\ref{fig:perfage} shows that, as $M$ increases, the performance of the proposed algorithm, in terms of maximum and average payoff achieved, becomes significantly larger than that of the equal allocation algorithm, and this performance advantage increases as more agents are deployed. Finally, Fig.~\ref{fig:perfage} also shows that the minimum performance of the proposed algorithm is comparable to the equal allocation algorithm for network with a small number of agents, but as the number of agents increases, the minimum performance of hedonic coalition formation is $29\%$ better than equal allocation at $M=7$, and this advantage increases further with $M$.
\begin{figure}[!t]
\begin{center}
\includegraphics[width=80mm]{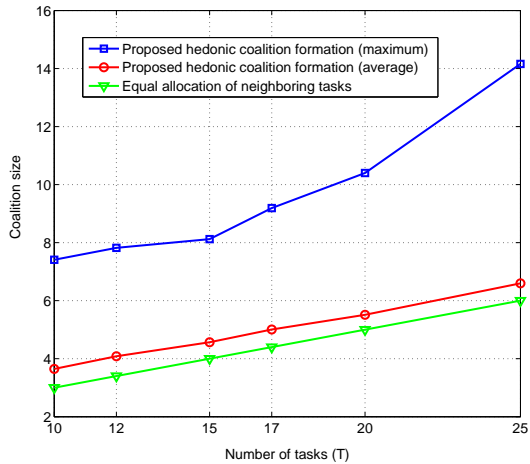}
\end{center}\vspace{-0.55cm}
\caption {Average and maximum (over order of play) coalition size yielded by the proposed algorithm and an algorithm that allocates the neighboring tasks equally among the agents, as a function of the number of tasks $T$ for a network of $M=5$ agents.} \label{fig:coalsize}\vspace{-0.8cm}
\end{figure}

In Fig.~\ref{fig:coalsize}, we show the average and maximum (over the random order of play) coalition size resulting from the proposed algorithm as the number of tasks $T$ increases, for a network of $M=5$~agents and arbitrarily deployed tasks. These results are averaged over the random positions of the tasks and are compared with the equal allocation algorithm. Fig.~\ref{fig:coalsize} shows that, as the number of tasks increases, the average coalition size for both algorithms increases. For the proposed algorithm, the maximum coalition size also increases with the number of tasks. This is an immediate result of the fact that, as the number of tasks increases, the probability of forming larger coalitions is higher and, hence, our proposed algorithm yields larger coalitions. Further, at all network sizes, the proposed algorithm yields coalitions that are relatively larger than the equal allocation algorithm. This result implies that, by allowing the players (agents and tasks) to selfishly select their coalitions, through the proposed algorithm, the players have an incentive to structure themselves in coalitions with average size lower bounded by the equal allocation. In a nutshell, through hedonic coalition formation, the resulting topology mainly consists of networks composed of a large number of small coalitions as demonstrated by the average coalition size. However, in a limited number of cases, the network topology can also be composed of a small number of large coalitions as highlighted by the maximum coalition size shown in Fig.~\ref{fig:coalsize}.

In Fig.~\ref{fig:speed},  we show, over a period of $5$ minutes, the frequency in terms of average switch operations per minute per player (agent or task) achieved for various velocities of the tasks in a mobile wireless network with $M=5$~agents and different number of tasks. As the velocity of the tasks increases,  the frequency of the switch operations
increases for both $T=10$ and $T=20$ due to the changes in the positions of the various tasks incurred by mobility. Fig.~\ref{fig:speed} shows that the case of $T=20$~tasks yields a frequency of switch operations significantly higher than the case of $T=10$~tasks. This result is interpreted by the fact that, as the number of tasks increases, the possibility of finding new partners as the tasks move increases significantly, hence yielding an increase in the topology variation as reflected by the number of switch operations. In summary, this figure shows that hedonic coalition formation allows the agents and the tasks to self-organize and adapt their topology to mobility, through adequate switch operations.
\begin{figure}[!t]
\begin{center}
\includegraphics[width=80mm]{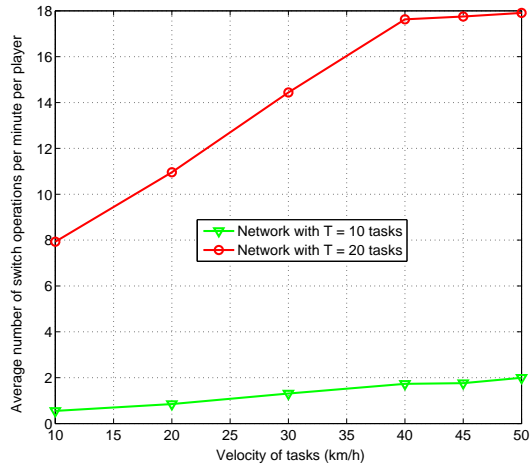}
\end{center}\vspace{-0.55cm}
\caption {Frequency of switch operations per minute per player achieved over a period of $5$ minutes for different tasks' velocities in a network having $M=5$~agents and different number of mobile tasks.} \label{fig:speed}\vspace{-0.8cm}
\end{figure}

The network's adaptation to mobility is further assessed in Fig.~\ref{fig:life}, where we show, over a period of $5$ minutes, the average coalition lifespan (in seconds) achieved for various velocities of the tasks in a mobile wireless network with $M=5$~agents and different number of tasks. The coalition lifespan is defined as the time (in seconds) during
which a coalition is present in the mobile network prior to accepting new members or breaking into smaller coalitions (due
to switch operations). Fig.~\ref{fig:life} shows that, as the velocity of the tasks increases,  the average lifespan of a coalition
decreases. This is due to the fact that, as mobility increases, the possibility of forming new coalitions
or splitting existing coalitions increases significantly. For example, for $T=20$, the coalition
lifespan drops from around $124$~seconds for a tasks' velocity of $10$~km/h to just under a minute as of $30$~km/h, and down to around $42$ seconds at $50$~km/h. Furthermore, Fig.~\ref{fig:life} shows that
as more tasks are present in the network, the coalition lifespan decreases. For instance, for any given velocity,
the lifespan of a coalition for a network with $T=10$~tasks is significantly larger than that of a coalition
in a network with $T=20$~tasks. This is a direct result of the fact that, for a given tasks' velocity, as more tasks are present in the network,
the players are able to find more partners to join with, and hence the lifespan of the coalitions becomes
shorter. In brief, Fig.~\ref{fig:life} provides an interesting assessment of the topology adaptation
aspect of the proposed algorithm through the process of forming new coalitions or breaking existing coalitions.

\begin{figure}[!t]
\begin{center}
\includegraphics[width=80mm]{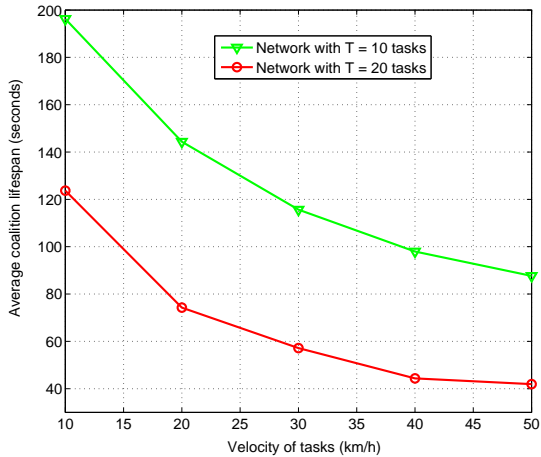}
\end{center}\vspace{-0.55cm}
\caption {Average coalition lifespan (in seconds) achieved over a period of $5$ minutes for different tasks' velocities in a network having $M=5$~agents and different number of mobile tasks.} \label{fig:life}\vspace{-0.4cm}
\end{figure}
Moreover, for further analysis of the self-adapting aspect of the proposed algorithm, we study the variations of the coalitional structure over time for a network where tasks are entering and leaving the network. For this purpose, in Fig.~\ref{fig:topadd}, we show the variations of the average (over the random positions of the tasks) number of players per coalition, i.e., the average coalition size, over a period of $10$ minutes, 
as new tasks join the network and/or existing tasks leave the network. The considered network in Fig.~\ref{fig:topadd} possesses $M=5$~agents and starts with $T=15$ tasks. The results are shown for different rates of change which is defined as the number of tasks that have either entered the network or left the network per minute. For example, a rate of change of $2$ tasks per minute indicates that either $2$ tasks enter the network every minute, $2$ tasks leave the network every minute, or $1$ tasks enters the network and another tasks leaves the network every minute (these cases may occur with equal probability). In this figure, we can see that, as time evolves, the structure of the network is changing, with new coalitions forming and other breaking as reflected by the change in coalitions size. Furthermore, we note that, as the rate of change increases, the changes in the topology increase. For instance, it is seen in Fig.~\ref{fig:topadd} that for a rate of change of $5$ tasks per minute, the variations in the coalition size are much larger than for the case of $2$ tasks per minute (which is almost constant for many periods of time). In summary, Fig.~\ref{fig:topadd} shows the network topology variations as tasks enter or leave the network. Note that, after the $10$ minutes have elapsed, the network re-enters in the Phase~III of the algorithm where data collection and transmission occurs.

\begin{figure}[!t]
\begin{center}
\includegraphics[width=80mm]{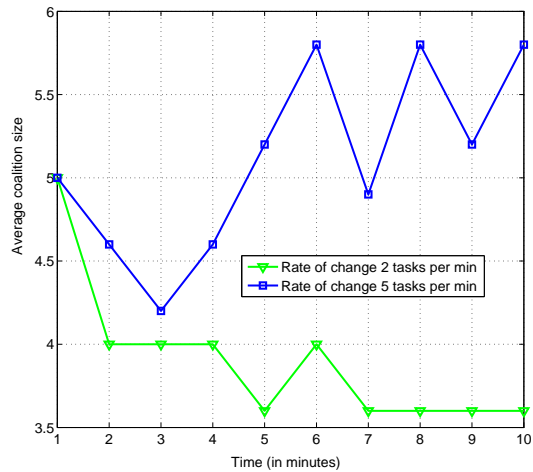}
\end{center}\vspace{-0.55cm}
\caption {Topology variation over time as new tasks join the network and existing tasks leave the network with different rates of tasks arrival/departure for a network starting with $T=15$ tasks and having $M=5$~agents.} \label{fig:topadd}\vspace{-0.6cm}
\end{figure}
In Fig.~\ref{fig:perfbeta}, we assess the performance of the proposed algorithm, in terms of the payoff (revenue) per player (agent or task) for a network having $M=5$~agents and $T=20$~tasks, as the throughput-delay tradeoff parameter $\beta$ increases. The figure shows the statistics, in terms of maximum, average, and minimum over the random order of play between the players. Fig.~\ref{fig:perfbeta} shows that, for small $\beta$, the performance of the proposed algorithm is comparable to the equal allocation algorithm and the payoffs are generally small. This result is due to the fact that, for small $\beta$, the tasks are highly delay sensitive, and the delay component of the utility governs the performance. Hence, for such tasks, the proposed algorithm yields a performance similar to equal allocation. However, as the tradeoff parameter $\beta$ increases, the maximum and average utility yielded by our proposed algorithm outperforms the equal allocation algorithm significantly. For instance, as of $\beta = 0.5$, hedonic coalition formation is highly desirable, and presents a performance improvement in terms of average payoff of around $19.56\%$ relative to the equal allocation algorithm (at $\beta =0.55$, the proposed algorithm has an average payoff of $ 0.55$ while equal allocation has an average payoff of $0.46$). This advantage increases with $\beta$. Note that, for all tradeoff parameters, the performance of the proposed algorithm, in terms of minimum (over order of play) payoff gained by a player is lower bounded by equal allocation and, in average, outperforms the equal allocation algorithm.\vspace{-0.3cm}
\begin{figure}[!t]
\begin{center}
\includegraphics[width=80mm]{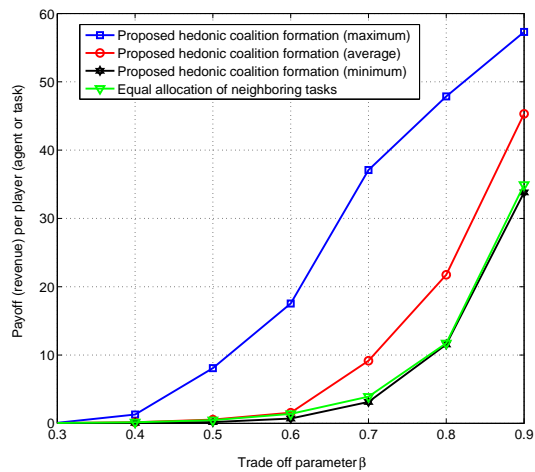}
\end{center}\vspace{-0.55cm}
\caption {Performance statistics, in terms of  maximum, average and minimum (over the order of play) player payoff, of the proposed algorithm compared to an algorithm that allocates the neighboring tasks equally among the agents as the throughput-delay tradeoff parameter $\beta$ increases for  $M=5$ agents and $T=20$~tasks.} \label{fig:perfbeta}\vspace{-0.8cm}
\end{figure}

\section{Conclusions}\label{sec:conc}\vspace{-0.2cm}
In this paper, we introduced a novel model for task allocation among a number of autonomous agents in a wireless communication network. In this model, a number of wireless agents are required to service several tasks, arbitrarily located in a given area. Each task represents a queue of packets that require collection and wireless transmission to a centralized receiver by the agents. The task allocation problem is modeled as a hedonic coalition formation game between the agents and the tasks that interact in order to form disjoint coalitions. Each formed coalition is mapped to a polling system which consists of a number of agents continuously collecting packets from a number of tasks. Within a coalition, the agents can act either as collectors that move between the different tasks present in the coalition for collecting the packet data, or relays for improving the wireless transmission of the data packets. For forming the coalitions, we introduce an algorithm that allows the players (tasks or agents) to join or leave the coalitions based on their preferences which capture the tradeoff between the effective throughput and the delay achieved by the coalition. We study the properties and characteristics of the proposed model, we show that the
proposed hedonic coalition formation algorithm always converges to a Nash-stable partition, and we
study how the proposed algorithm allows the agents and tasks to take distributed decisions for adapting
the network topology to environmental changes such as the deployment of new tasks, the removal of
existing tasks or the mobility of the tasks. Simulation results show how the proposed algorithm allows the agents and tasks to self-organize into independent coalitions, while improving the performance, in terms of average player (agent or task) payoff, of at least $30.26\%$  (for a network of $5$ agents with up to $25$~tasks)  relatively to a scheme that allocates nearby tasks equally among the agents. In a nutshell, by combining concepts from wireless networks, queueing theory and novel concepts from coalitional game theory, we proposed a new model for task allocation among autonomous agents in communication networks which is well suited for many practical applications such as data collection, data transmission, autonomous relaying, operation of message ferry (mobile base stations), surveillance, monitoring, or maintenance
in next-generation wireless networks.\vspace{-0.3cm}

\nocite{WS00}
\renewcommand{\baselinestretch}{0.85}
\bibliographystyle{IEEEtran}
\bibliography{references}\vspace{-1cm}

\end{document}